\newcommand{\tran}{\mathsf{T}}
\newcommand{\wh}{\widehat}
\newcommand{\diag}{\mathrm{diag}}
\renewcommand{\ge}{\geqslant}
\renewcommand{\le}{\leqslant}
\newcommand{\info}{\mathcal{I}}
\newcommand{\dnorm}{\mathcal{N}}
\newcommand{\sumdot}{\text{\tiny$\bullet$}}
\newcommand{\e}{\mathbb{E}}
\newcommand{\zij}{Z_{ij}}
\newcommand{\zipj}{Z_{i'j}}
\newcommand{\zzt}{(ZZ^\tran)}
\newcommand{\ztz}{(Z^\tran Z)}
\newcommand{\nid}{N_{i\sumdot}}
\newcommand{\ndj}{N_{\sumdot j}}
\newcommand{\nrd}{N_{r\sumdot}}
\newcommand{\nds}{N_{\sumdot s}}
\newcommand{\ai}{a_i}
\newcommand{\bj}{b_j}
\newcommand{\eij}{e_{ij}}
\newcommand{\ssa}{\sigma^2_A}
\newcommand{\ssb}{\sigma^2_B}
\newcommand{\sse}{\sigma^2_E}
\newcommand{\fpe}{\sigma^4_E}
\newcommand{\ssah}{\hat\sigma^2_A}
\newcommand{\ssbh}{\hat\sigma^2_B}
\newcommand{\sseh}{\hat\sigma^2_E}
\newcommand{\betah}{\hat\beta}
\newcommand{\ka}{\kappa_A}
\newcommand{\kb}{\kappa_B}
\newcommand{\ke}{\kappa_E}
\newcommand{\yij}{Y_{ij}}
\newcommand{\xij}{x_{ij}}
\newcommand{\xbid}{\bar x_{i\sumdot}}
\newcommand{\xbdj}{\bar x_{\sumdot j}}
\newcommand{\xbdd}{\bar x_{\sumdot \sumdot}}
\newcommand{\xtdj}{\tilde x_{\sumdot j}}
\newcommand{\var}{\mathrm{Var}}
\newcommand{\cov}{\mathrm{Cov}}
\newcommand{\eff}{\mathrm{eff}}
\newcommand{\simiid}{\stackrel{\mathrm{iid}}{\sim}}
\newcommand{\p}{\stackrel{p}{\to}}
\newcommand{\natu}{\mathbb{N}}
\newcommand{\real}{\mathbb{R}}
\newcommand{\vi}{v_{1,i}}
\newcommand{\vj}{v_{2,j}}
\newcommand{\phm}{\phantom{-}} 
\newcommand{\phe}{\phantom{=}\,\,} 
\newcommand{\ols}{\mathrm{OLS}}
\newcommand{\gls}{\mathrm{GLS}}
\newcommand{\rls}{\mathrm{RLS}}
\newcommand{\cls}{\mathrm{CLS}}
\newcommand{\bsu}{\boldsymbol{u}}
\newcommand{\bsx}{\boldsymbol{x}}
\newtheorem{thm}{Theorem}[section]
\theoremstyle{definition}
\newtheorem{model}{Model}  
\begin{document}

\begin{frontmatter}

\title{Estimation and Inference for Very Large Linear Mixed Effects Models}
\runtitle{Very Large Linear Mixed Effects Models}


\begin{aug}
	\author{\fnms{Katelyn} \snm{Gao}\corref{}\ead[label=e1]{kxgao@stanford.edu}\thanksref{t1}}
	\and
	\author{\fnms{Art B.} \snm{Owen}\ead[label=e2]{owen@stanford.edu}\thanksref{t2}}
	
	\thankstext{t1}{Supported under US NSF Grant DGE-114747}
	\thankstext{t2}{Supported under US NSF Grant DMS-1407397}
	
	\runauthor{Gao and Owen}
	
	\affiliation{Stanford University}
	
	\address{Department of Statistics, Stanford University \\ 390 Serra Mall\\ Stanford, CA 94305\\ \printead{e1,e2}}

\end{aug}

\begin{abstract}
Linear mixed models with large imbalanced crossed random effects structures
pose severe computational problems for maximum likelihood estimation
and for Bayesian analysis. The costs can grow as fast as $N^{3/2}$ when there are
$N$ observations.  Such problems arise in any setting where the underlying
factors satisfy a many to many relationship (instead of a nested one)
and in electronic commerce applications, the $N$ can be quite large.
Methods that do not account for the correlation structure can greatly underestimate uncertainty.
We propose a method of moments approach that takes account of the correlation
structure and that can be computed at $O(N)$ cost.
The method of moments is very amenable to parallel computation
and it does not require parametric distributional assumptions, tuning parameters
or convergence diagnostics.
For the regression coefficients, we give conditions for consistency and
asymptotic normality as well as a consistent variance estimate.
For the variance components, we give conditions for consistency and
we use consistent estimates of a mildly conservative variance estimate.
All of these computations can be done in $O(N)$ work.
We illustrate the algorithm with some data from Stitch Fix where the crossed random effects
correspond to clients and items. 
\end{abstract}

\begin{keyword}[class=MSC]
\kwd{62F10}
\kwd{62K05}
\kwd{62K10}
\end{keyword}

\begin{keyword}
\kwd{linear mixed models}
\kwd{crossed}
\kwd{scalable}
\end{keyword}

\end{frontmatter}

\section{Introduction}\label{sec:intro}

The field of statistics is confronting two important challenges at present.
The first is the arrival of ever larger data sets, sometimes described as `big data'.
See, for instance,  \cite{prov:fawc:2013} and~\cite{vari:2014}.
The second is the reproducibility crisis, in which published findings cannot be replicated.
This problem was clearly presented by~\cite{ioan:2005}
among others, and it has lead to the American Statistical Association
releasing a statement on $p$-values \citep{wass:laza:2016}.

We might naively hope that the first problem would remove the second one.
While larger data sets can greatly reduce uncertainty, difficulties remain.
The one that we consider here is a crossed random effects structure in the data.
That structure introduces a dense tangle of correlations that can sharply
reduce the effective sample size of the data at hand.  If, as we suspect, most
data scientists treat these large data sets as IID samples, then they will
greatly underestimate the uncertainty in their fitted models.
The usual methods for this problem, whether based on maximum likelihood or on Bayes,
scale badly to large data sets, with a cost that grows superlinearly
in the sample size. We present and study a method of moments approach
with cost that scales linearly in the problem size, among other advantages.

The sort of data that motivate us arise in e-commerce applications.
The factors are variables like cookies, customer IDs, query strings,
IP addresses, product IDs (e.g., SKUs), URLs and so on.
The most direct way to handle such variables is to treat them as
categorical variables that simply happen to have a large number of
levels, including many that have not yet appeared in the data.
We think that a random effects model is more appropriate.
For instance, internet cookies are cleared regularly and hence any specific
cookie is likely to disappear shortly.   It is therefore appropriate
to consider the specific cookies in a data set as a sample from some
distribution, that is, as a random effect. Similarly there is turnover in
popular products and queries, that motivate treating them as random effects
too.

While the largest crossed random effect data sets we know of are in e-commerce, we expect
the problem to arise in other settings where data set sizes are growing. The crossed random
effects structure is very fundamental. 
Any setting with a many to many mapping of factor levels
involves crossed effects that one might want to model as random.
In agriculture there are gene by environment crosses. In education, neither schools
nor neighborhoods are perfectly nested within the other \citep{R93} and in multiyear
data sets there is a many to many relationship between teachers and students.

When our chosen model involves only one of these random effect entities 
then a hierarchical model, based on Bayes or empirical
Bayes, can be quite effective.
Things change considerably when we want to use two or more
crossed random effects.  
In this paper, we consider the following model,
\begin{model}\label{mod:twofactor}
Two-factor linear mixed effects:
\begin{equation}\label{eq:refmodel}
\begin{split}
&\yij=\xij^\tran\beta+a_i+b_j+e_{ij},\quad \xij\in\real^p,\quad i,j\in\natu\quad\text{where},\\
&a_i\simiid (0,\ssa),\quad b_j\simiid(0,\ssb),\quad e_{ij}\simiid(0,\sse)\quad\text{(independently) and}, \\
&\e(a_i^4) < \infty,\quad \e(b_j^4) < \infty,\quad \e(e_{ij}^4) < \infty. 
\end{split}
\end{equation}
\end{model}
For instance, customer $i$ might assign a score $Y_{ij}$ to product $j$.
Then $\xij$ contains features about the customer or product or
some joint properties of them, $\beta$ is of interest for the company 
choosing what product to recommend, $b_j$ measures some general
appeal of the product not captured by the features in $\xij$,
while $a_i$ captures variation in which customers are harder or easier
to please and $e_{ij}$ is an error term.
This is a mixed effects model because it contains both random effects
$\ai$, $\bj$ and fixed effects $\xij$.

Model~\ref{eq:refmodel} describes any $ij$ pair,  but the given data
set will only contain some finite number $N$ of them.
If the available data are laid out as rows $i$ and columns $j$
with $R$ distinct rows and $C$ distinct columns, then  the cost
of fitting a generalized least squares regression model
for $\beta$ scales as $O((R+C)^3)$ because it solves a
$p\times p$ system of equations with $p\ge R+C$.
See \cite{SCM92}, \cite{R93} and \cite{B14}.
Now because $RC\ge N$
we have $\max(R,C)\ge\sqrt{N}$ and $(R+C)^3>N^{3/2}$.

\cite{GO17} consider an intercept-only version of Model~\ref{eq:refmodel}
where $\xij^\tran\beta$ is simply a constant $\mu\in\real$ for all $i$ and $j$.
They find that Markov chain Monte Carlo does not solve the inference problem.
All of the MCMC methods considered either failed to mix, or converged to the wrong
answer, and this took place already at modest sample sizes. For the specific case of
a Gibbs sampler and Gaussian $\ai$, $\bj$ and $\eij$, 
using methods from \cite{RS97} they show it will take $O(N^{1/2})$ iterations
costing $O(N)$ each to converge, for a total cost of $O(N^{3/2})$.
\cite{F13} presents a very general equivalence between the convergence 
rate of an iterative equation solver and the convergence rate of an 
associated MCMC scheme, so these identical rates may be a sign of
a deeper equivalence.
Consensus Bayes \citep{scot:etal:2016} splits the data into shards, one
per processor.  However the data given to each shard has to be independent
and here data sets corresponding to a subset of rows will have correlations
due to commonly sampled columns (and vice versa).

We find that existing Bayes and likelihood methods are not effective
for this problem. Here we present an approach based on the method
of moments.  We seek estimates  $\hat\beta$, $\ssah$, $\ssbh$ and
$\sseh$ along with variance estimates for these quantities.
We have three criteria:
\begin{compactenum}[\quad1)]
\item the total computational cost {\sl must} be $O(N)$ time and $O(R+C)$ space,
\item the variance estimates {\sl should} be reliable or conservative, and
\item we {\sl prefer} $\hat\beta$ to be statistically efficient.
\end{compactenum}
We regard the first criterion as a constraint that must be met.
For the second criterion, a mild over-estimate of $\var(\hat\beta)$ is
acceptable in order to keep the costs in $O(N)$.  
The third criterion is to be met as well as we can, subject
to constraints given by the first two. Computational efficiency
is more important than statistical efficiency in this context. 
For very large $N$,
requiring $O(N^{3/2})$ computation is like asking for an oracle.

The method of moments meets our $O(N)$ time and $O(R+C)$ space
criteria, and here we show that it can also yield reliable variance
estimates.  Further advantages of the method of moments are
that it does not require parametric distributional assumptions, there are
no tuning parameters to choose, and most importantly for large $N$, it
is very well suited to parallel computation. The method of moments is not
without drawbacks.  Sometimes it yields parameter estimates that are out of
bounds, such as negative variance estimates.  

An outline of this paper is as follows. 
Section~\ref{sec:background} introduces most of the notation for Model~\ref{mod:twofactor},
especially the pattern of missingness in the data, and gives some of the asymptotic assumptions.
Section~\ref{sec:algorithm} presents our algorithm 
and shows that it takes $O(N)$ time and $O(R+C)$ space.
We  compute a generalized least squares (GLS) estimate for a model
with either row or column variance components, but not both. 
We choose based on an efficiency criterion.
Then we estimate $\var(\betah)$ accounting for all three error terms including the one
left out of the GLS estimate.
Section~\ref{sec:stitchfix} illustrates our algorithm with some ratings data from Stitch Fix.  
There $\yij$ is a rating, from a $10$ point scale, by customer $i$ on item $j$, with features $\xij$.
Compared to OLS estimates, the random effects model leads to standard errors on coefficients
 $\beta_j$ that can be more than ten times higher.   That may be interpreted as an effective sample size
which is less than $1$\% of the nominal sample size.
Section~\ref{sec:theory} gives conditions under which $\betah\p\beta$,
and further conditions for  the variance components to be consistent.
There is also a central limit theorem for $\betah$.
Section~\ref{sec:experiments} compares our method of moments estimator
to a state of the art GLMM code \citep{lme4julia} written in Julia
\citep{beza:edel:karp:shah:2017}.
We find that algorithm takes $O(N^{3/2})$ cost per iteration, with
a number of iterations that, in our simulations, depends on $N$.
On problems where the GLMM code gives an answer we find it more
statistically efficient for $\beta$ and $\sse$ but not for $\ssa$ or $\ssb$.
Section~\ref{sec:conclusion} discusses some future work. 

Our method of moments approach is similar to methods \cite{H53} develops for Gaussian data. 
\cite{GO17} use $U$-statistics to find a counterpart to the Henderson I estimator that can be computed in $O(N)$ time and $O(R+C)$ space.  
They also get a variance estimator for their variance components, without
assuming a Gaussian distribution.
The variance estimator can be computed in $O(N)$ time.
It targets a mildly conservative upper bound on the variance
as the variance itself, like the one for Henderson's estimates, takes more than $O(N)$ computation.
In this paper we incorporate fixed effects along with the random effects, just as Henderson II 
does in generalizing Henderson I.  Henderson III allows for interactions between fixed and random effects.
We believe such interactions are very reasonable 
in our motivating applications, but incorporating them is outside the scope of this article.

Our analysis is for a fixed dimension $p$. This is reasonable for our
motivating data from Stitch Fix, where $p\ll N$.
It remains to develop methods for cases where $p\to\infty$ with $N$.

Another issue that we do not address in this article is selection  bias
in the available observations. 
Sometimes ratings are biased towards the high end because
customers seek products that they expect to like and companies endeavor to 
recommend such
products. In other data sets, such as restaurant reviews, customers
may be more likely to make a rating when they are  either very unhappy 
or very happy.  For such data, the
ratings will be biased towards both extremes and away from the middle.
Accounting for selection bias requires assumptions or information from outside 
the given data.
Propensity weighting \citep[Chapter 13]{imbe:rubi:2015} could well
fit into our framework, but we leave that out of this paper, as the basic
problem without selection bias is already a challenge.

\section{Notation and asymptotic conditions}\label{sec:background}

Here we give a fuller presentation of our notation.
Equation \eqref{eq:refmodel} describes the distribution
of seen and future data.
We call the first index of $\yij$ the `row' and the second the `column'. We use integers $i,i',r,r'$ to index rows and $j,j',s,s'$ for columns, but the actual indices may be URLs, customer IDs, or query strings.
The index sets are countably infinite to always 
leave room for unseen levels in the future.

The variable $\zij$ takes the value $1$ if $(\xij,\yij)$ is observed and $0$ otherwise. 
We assume that there is at most one observation in position $(i,j)$. 
For customer rating data, we suppose that if $i$ has rated $j$ multiple times, then
only the most recent rating is retained.  In many other settings, only a negligible
fraction of $ij$ pairs will have been duplicated.

The sample size is $N=\sum_{ij}\zij<\infty$. The number of observations in row $i$ is $\nid=\sum_j\zij$ and the number in column $j$ is $\ndj=\sum_i\zij$. The number of distinct rows is $R=\sum_i1_{\nid>0}$ and there are $C=\sum_j1_{\ndj>0}$ distinct columns. In the following, summing over rows $i$ means
summing over just the $R$ rows $i$ with $\nid>0$, and similarly for sums over columns. 
This convention corresponds to what happens when one makes a pass through the whole data set. 

Let $Z$ be the matrix containing $\zij$. 
Then $\zzt_{ii'} = \sum_j\zij\zipj$ is the number of columns for which we have data in both rows 
$i$ and $i'$. Similarly, $\ztz_{jj'}$ is the number of rows in which both columns $j$ and
 $j'$ are observed. Note that $\zzt_{ii'}\le\nid$ and $ \ztz_{jj'}\le\ndj$.
We will use the following identities:
$$\sum_{ir}\zzt_{ir} =\sum_j\ndj^2,\quad\text{and}\quad 
\sum_{js}\ztz_{js}=\sum_i\nid^2.$$

This notation allows for an arbitrary pattern of observations. We mention three special cases. A 
balanced crossed design has $\zij = 1_{i\le R}1_{j\le C}$. If $\max_i\nid=1$ but 
$\max_j\ndj>1$ then the data have a hierarchical  structure with rows nested in columns. 
If $\max_i\nid=\max_j\ndj=1$, then the observed $\yij$ have IID errors. Some of these patterns 
cause problems for parameter estimation. For example, if the errors are IID, then the variance 
components are not identifiable. Our assumptions rule these out to focus on large genuinely crossed data sets.

The following vectors are useful for subsequent analyses. Let $\vi$ be the length-$N$ vector 
with ones in entries $\sum_{r=1}^{i-1} \nrd+1$ to $\sum_{r=1}^i \nrd$ and zeros elsewhere. 
Similarly, let $\vj$ be the length-$N$ vector with ones in entries $\sum_{s=1}^{j-1}\nds+1$ 
to $\sum_{s=1}^j \nds$ and zeros elsewhere.


Next, we describe our asymptotic assumptions. 
 First
\begin{align}\label{eq:assum:nohogs}
\epsilon_R = \max_i\nid/N \to 0,\quad\text{and}\quad\epsilon_C=\max_j\ndj/N \to 0,
\end{align}
so no single row or column dominates.
The average row size can be measured by $N/R$ or by 
$\sum_i\nid^2/N$; the latter is $\e(\nid)$ when choosing one of the $N$ data points 
$(i,j,\xij,\yij)$ at random (uniformly).
Similar formulas hold for the average column size. 
These average row and column sizes are $o(N)$, because
$$\frac1{N^2}\sum_i\nid^2\le\epsilon_R\to0,\quad
\text{and}\quad \frac1{N^2}\sum_j\ndj^2\le\epsilon_C\to0.$$
We often expect the average row and column sizes, while growing slower than $N$,
should diverge:
\begin{align*}
&\min(N/R,N/C) \to \infty,\quad\text{and}\\
\begin{split}
&\min\Bigl(\frac1N\sum_i\nid^2,\frac1N\sum_j\ndj^2\Bigr) \to \infty.
\end{split}
\end{align*} 
We do not however impose these conditions.


Even for large average row and columns sizes, there can still be numerous new
or rare entities with $\nid=1$ or $\ndj=1$.
Our analysis can include such small rows and columns without requiring
that they be deleted.
When there are covariates $\xij$ we need to rule out
degenerate settings where the sample variance of $\xij$ does not grow with $N$
or where it is dominated by a handful of observations.  We add some such
conditions when we discuss central limit theorems in Section~\ref{sec:clt}.

The finite fourth moments $\e(a_i^4)$, $\e(b_j^4)$ and $\e(e_{ij}^4)$
are conveniently described through finite kurtoses $\ka$, $\kb$ and $\ke$,
respectively.  Some of the variance expressions
in \cite{GO17} are dominated by terms proportional to $\kappa+2$
for one of these kurtoses.
Following \cite{GO17} we assume that $\min(\ka,\kb,\ke)>-2$.
This lower bound rules out some symmetric binary distributions
for $\ai$, $\bj$ and $\eij$. Such cases seem unrealistic for our
motivating applications.

The randomness in $\yij$ comes from $\ai$, $\bj$ and $\eij$.
In some places we combine them into $\eta_{ij} \equiv \ai+\bj+\eij$.

\cite{GO17} found exact finite sample formulas 
for the variance of their method
of moments estimators $\ssah$, $\ssbh$ and $\sseh$. 
They then derived asymptotic expressions 
letting $\epsilon_R$, $\epsilon_C$, $R/N$ and $C/N$
approach zero.  The Stitch Fix data 
that we consider in Section~\ref{sec:stitchfix}
does not have a very small value for $R/N$.  Here we 
develop non-asymptotic magnitude bounds for 
bias and variance that do not 
require $R/N$ and $C/N$ to be close to zero. 
They need only be bounded away from one. 

\begin{thm}\label{thm:varcoefbiasandvar}
Suppose that $\max(R/N,C/N)\le\theta$
for some $\theta<1$ and let $\epsilon=\max(\epsilon_R,\epsilon_C)$. 
Then the moment based estimators from~\cite{GO17} satisfy
\begin{align*}
\e(\ssah)& =
(\ssa +\Upsilon)  (1+O(\epsilon)),\\
\e(\ssbh) & =(\ssb+\Upsilon) (1+O(\epsilon)),\quad\text{and}\\
\e(\sse) & =(\sse+\Upsilon)(1+O(\epsilon)),
\end{align*}
where 
$$\Upsilon \equiv 
\ssa \frac{\sum_i\nid^2}{N^2}
+\ssb\frac{\sum_j\ndj^2}{N^2}
+\frac{\sse }N = O(\epsilon).
$$
Furthermore
\begin{align*}
\max\Bigl(
\var(\ssah),\var(\ssbh),\var(\sseh)\Bigr) 
 & = 
O\Bigl( \frac{\sum_i\nid^2}{N^2}+\frac{\sum_j\ndj^2}{N^2}\Bigr) 
=O(\epsilon). 
\end{align*}
\end{thm}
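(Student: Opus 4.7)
My plan is to reduce everything to the exact finite-sample moment identities already derived in \cite{GO17} for $\ssah$, $\ssbh$, and $\sseh$, and then re-bound the resulting rational expressions under the relaxed condition $\max(R/N,C/N)\le\theta<1$. Those identities express $\e(\ssah)$ as an exact linear combination of $\ssa$, $\ssb$, $\sse$, and $\var(\ssah)$ as an exact quadratic in those variance components (with $\ka,\kb,\ke$ entering linearly), so no new moment calculation is required; the work lies entirely in tracking how each coefficient depends on $N$, $R$, $C$, $\sum_i\nid^2$, $\sum_j\ndj^2$, and the higher counts $\sum_{ii'}\zzt_{ii'}^2$, $\sum_{jj'}\ztz_{jj'}^2$.

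For the bias claims I would expand the GO17 expression for $\e(\ssah)$ as a linear combination of $\ssa,\ssb,\sse$. The coefficient of $\ssa$ equals $1+\sum_i\nid^2/N^2$ up to a factor of $1+O(\epsilon)$, the coefficient of $\ssb$ equals $\sum_j\ndj^2/N^2$ up to the same factor, and the coefficient of $\sse$ equals $1/N$ up to $1+O(\epsilon)$; any factors of $N-R$ or $N-C$ appearing in denominators are replaced by $N/(1-\theta)$, absorbing a constant into $O(\epsilon)$. Assembling these pieces gives $\e(\ssah)=(\ssa+\Upsilon)(1+O(\epsilon))$, and the statements for $\e(\ssbh)$ and $\e(\sseh)$ follow by row/column symmetry for $\ssbh$ and by identical bookkeeping for $\sseh$. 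The claim $\Upsilon=O(\epsilon)$ is then immediate from $\sum_i\nid^2/N^2\le\epsilon_R$, $\sum_j\ndj^2/N^2\le\epsilon_C$, and $1/N\le\max_i\nid/N\le\epsilon$.

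For the variance bound, every coefficient in the GO17 expression for $\var(\ssah)$ is a ratio whose numerator is one of $N$, $\sum_i\nid^2$, $\sum_j\ndj^2$, $\sum_{ii'}\zzt_{ii'}^2$, or $\sum_{jj'}\ztz_{jj'}^2$, and whose denominator is a power of $N$ possibly multiplied by $N-R$ or $N-C$. I would bound the more delicate counts using the pointwise inequality $\zzt_{ii'}\le\min(\nid,\nipd)\le\epsilon N$ combined with the identity $\sum_{ir}\zzt_{ir}=\sum_j\ndj^2$ from Section~\ref{sec:background}, giving $\sum_{ii'}\zzt_{ii'}^2\le\epsilon N\sum_j\ndj^2$, and symmetrically $\sum_{jj'}\ztz_{jj'}^2\le\epsilon N\sum_i\nid^2$. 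After dividing by the appropriate power of $N$ and replacing $N-R,N-C$ by $N/(1-\theta)$, every surviving term is bounded by a constant multiple of $\sum_i\nid^2/N^2+\sum_j\ndj^2/N^2=O(\epsilon)$, and the same bookkeeping for $\var(\ssbh)$ and $\var(\sseh)$ gives the uniform conclusion.

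The main obstacle I anticipate is precisely this last tightening. A crude bound $\zzt_{ii'}\le N$ would only yield $\sum_{ii'}\zzt_{ii'}^2=O(N^3)$ and thus $\var(\ssah)=O(1)$, not $O(\epsilon)$. The essential saving comes from pairing the pointwise $\epsilon N$ bound on $\zzt_{ii'}$ with the identity $\sum_{ir}\zzt_{ir}=\sum_j\ndj^2$, extracting a factor of $\epsilon$ from $\max_{ii'}\zzt_{ii'}$ while keeping the summability handle; ensuring that every kurtosis-weighted term, and in particular the diagonal contributions $\zzt_{ii}=\nid$, enjoys the same $O(\epsilon)$ control is where the bookkeeping must be carried out most carefully.
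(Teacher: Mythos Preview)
Your plan is sound and matches the natural route the paper's supplement takes: start from the exact finite-sample moment identities of \cite{GO17} and bound each rational term using $\max(R/N,C/N)\le\theta$ together with $\epsilon=\max(\epsilon_R,\epsilon_C)$.

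One simplification you may be missing on the bias side: the GO17 estimators are \emph{exactly} unbiased in the intercept-only model (the paper states this in Section~\ref{sec:mom}), so $\e(\ssah)=\ssa$, $\e(\ssbh)=\ssb$, $\e(\sseh)=\sse$ identically. The displayed form $(\ssa+\Upsilon)(1+O(\epsilon))$ therefore follows immediately from $\Upsilon=O(\epsilon)$, which you verify via $\sum_i\nid^2/N^2\le\epsilon_R$, $\sum_j\ndj^2/N^2\le\epsilon_C$, $1/N\le\epsilon$. Your coefficient-by-coefficient expansion of $\e(\ssah)$ is not needed, though it is not wrong either.

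The substance of the theorem is the variance bound, and there your mechanism is the right one. The pairing
\[
\sum_{ii'}\zzt_{ii'}^2 \;\le\; \Bigl(\max_{ii'}\zzt_{ii'}\Bigr)\sum_{ii'}\zzt_{ii'}
\;\le\; \epsilon_R N\sum_j\ndj^2,
\]
using $\zzt_{ii'}\le\max_i\nid=\epsilon_R N$ and $\sum_{ir}\zzt_{ir}=\sum_j\ndj^2$, together with its column analogue, is exactly what extracts the extra factor needed to land on $O\bigl(\sum_i\nid^2/N^2+\sum_j\ndj^2/N^2\bigr)$ after the entries of $M^{-1}$ contribute their $O(1/N)$ or $O(1/N^2)$ scalings. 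Your caution about the diagonal contributions $\zzt_{ii}=\nid$ and the kurtosis-weighted pieces is well placed; those terms do require separate bookkeeping but obey the same bound.
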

\begin{proof}
See Section~\ref{sec:backgroundproof} in the supplement.
\end{proof}

Theorem~\ref{thm:varcoefbiasandvar} has the same variance
rate for all variance components. In our computed examples $\var(\sse)\ll\min(\var(\ssa),\var(\ssb))$
because $N\gg\max(R,C)$, a condition not imposed in Theorem~\ref{thm:varcoefbiasandvar}.
Both bias and variance are  $O(\epsilon)$ and so
a (conservative) effective sample size is then $O(1/\epsilon)$.
The quantity $\Upsilon$ appearing in Theorem~\ref{thm:varcoefbiasandvar}
is $\var(\bar Y_{\sumdot\sumdot})$
where $\bar Y_{\sumdot\sumdot} = (1/N)\sum_{ij}\zij\yij$.
The variances of the variance components contain similar quantities to $\Upsilon$
although kurtoses and other quantities appear in their implied constants.

\section{An Alternating Algorithm}\label{sec:algorithm}

Our estimation procedure for Model~\ref{eq:refmodel} is given in
Algorithm~\ref{alg:alternating}. We alternate twice between finding $\betah$ and the variance component estimates $\ssah$, $\ssbh$, and $\sseh$. Further details of these steps, including the way we choose generalized least squares (GLS) estimator to use in step $3$, are given in the next two subsections.

The data are a collection of  $(i,j,\xij,\yij)$ tuples. A pass over the data proceeds via iteration over all tuples in the data set. Such a pass may generate $O(R+C)$ intermediate values to be retained for future computations. 

\begin{algorithm} [ht] 
	Estimate $\beta$ via ordinary least squares (OLS): $\betah=\betah_{\ols}$. \\
	Let $\ssah$, $\ssbh$, and $\sseh$ be the method of moments estimates from \citep{GO17} defined on the data $
	(i,j,\hat\eta_{ij})$, where $\hat\eta_{ij} = \yij-\xij^\tran\betah_{\ols}$.\\
	Compute a more efficient $\betah$ using $\ssah$, $\ssbh$, and $\sseh$. If $\ssah\max_i\nid\ge\ssbh\max_j\ndj$, 
	estimate $\beta$ via GLS accounting for row correlations: $\betah=\betah_{\rls}$. Otherwise, estimate it via GLS 
	accounting for column correlations: $\betah=\betah_{\cls}$. \\
	Repeat step 2 using $\hat\eta_{ij} = \yij-\xij^\tran\betah$ with $\betah$ from step 3.\\
Compute an estimate $\wh\var(\hat\beta)$ for $\betah$ from step 3 using $\ssah$, $\ssbh$ and $\sseh$ from step 4.
	\caption{Alternating Algorithm}
	\label{alg:alternating}
\end{algorithm}

\subsection{Step by step details for Algorithm~\ref{alg:alternating}}\label{sec:rlscls}

\subsubsection{Step 1}\label{sec:step1}

The first step of Algorithm~\ref{alg:alternating} is to compute the OLS estimate of $\beta$. 
Let $X\in\real^{N\times p}$ have rows $\xij$ in some order and let
$Y\in\real^{N}$ be elements $\yij$ in the same order. Then,
\begin{align}\label{eq:ols}
\betah_{\ols} &= (X^\tran X)^{-1}X^\tran Y = \biggl(\sum_{ij}\zij\xij\xij^\tran\biggr)^{-1}\sum_{ij}\zij\xij\yij.
\end{align}

In one pass over the data, we can compute $X^\tran X$ and $X^\tran Y$ and solve for $\betah$. 
Solving the normal equations this way is easy to parallelize but more prone to roundoff error than 
the usual alternative based on computing the SVD of $X$.  
The numerical conditioning of the SVD computation 
is like doubling the number of floating point bits available compared to 
solving normal equations.
One can compensate by solving normal equations in extended precision.
It costs $O(p^3)$ to compute $\betah_\ols$ and so the cost of step $1$ is 
$O(Np^2+p^3)$. The space cost is $O(p^2)$.

\subsubsection{Step 2}\label{sec:step2}

Step 2 uses the algorithm from \cite{GO17} to compute variance component estimates
$\ssah$, $\ssbh$ and $\sseh$ in $O(N)$ time and $O(R+C)$ space.
A more detailed account is in Section~\ref{sec:mom}.
This takes $O(Np)$ time to recompute~$\hat\eta_{ij}$.

\subsubsection{Step 3}\label{sec:step3}

\paragraph{GLS estimators:} 
First we define and compare  GLS estimators of $\beta$ accounting for row correlations, or column 
correlations, or both.  These estimators are most easily presented through a reordering of the data. 
Our algorithm does not have to actually sort the data which would be
a major inconvenience in our motivating applications.
We work with one row ordering of the data, in which $ij$ precedes $i'j'$ whenever $i<i'$
and with one column ordering of the data.
Let $P$ be the $N\times N$ permutation matrix corresponding to the transformation of the column ordering to 
the row ordering. Let $A_R\in\natu^{N\times N}$ be the block diagonal matrix with $i$'th block $1_{\nid}1_{\nid}^\tran$ and $B_C\in\natu^{N\times N}$  the
 block diagonal matrix with $j$'th block $1_{\ndj}1_{\ndj}^\tran$. 

If $Y$ is given in the row ordering, then
\begin{align}\label{eq:rowcov}
\cov(Y) & = V_R \equiv \sse I_N+\ssa A_R+\ssb B_R, \quad\text{for}\quad B_R=P B_C P^\tran.
\end{align}
For $Y$ in the column ordering,
\begin{align}\label{eq:colcov}
\cov(Y) & = V_C \equiv \sse I_N+\ssa A_C+\ssb B_C, \quad\text{for}\quad A_C=P^\tran A_R P.
\end{align}
GLS algorithms
based on~\eqref{eq:rowcov} or~\eqref{eq:colcov} have computational complexity $O(N^{3/2})$.
This is better than the $O(N^3)$ that we might face had $V_R$ or $V_C$ been arbitrary
dense matrices, instead of being comprised of the identity and some low rank block
diagonal matrices, but it is still too slow for large scale applications.

In a hierarchical model where only row correlations were present we could
take $\ssb=0$ and define 
\begin{align}\label{eq:rls}
\betah_{\rls} &= (X^\tran \hat V_A^{-1} X)^{-1}X^\tran\hat V_A^{-1}Y, \quad\text{for}\quad
 \hat V_A=\sseh I_N+\ssah A_R,
\end{align}
using sample estimates $\ssah$ and $\sseh$ of $\ssa$ and $\sse$.
This GLS estimator of $\beta$ accounts for the intra-row correlations in the data.
Similarly, the GLS estimator of $\beta$ accounting for the intra-column correlations is 
\begin{align}\label{eq:cls}
\betah_{\cls} &= (X^\tran \hat V_B^{-1} X)^{-1}X^\tran \hat V_B^{-1}Y, \quad \text{for}\quad
\hat V_B = \sseh I_N+\ssbh B_C.
\end{align}
We show next that  $\betah_{\rls}$ and $\betah_{\cls}$ can be computed in $O(N)$ time.


\paragraph{GLS Computations in $O(N)$ cost:} 
From the Woodbury formula \citep{H89} and defining $Z_a\in\{0,1\}^{N\times R}$ 
as the matrix with $i$th column $\vi$
(from Section~\ref{sec:background}), we have
\begin{align*}
X^\tran \hat V_A^{-1}X &= X^\tran (\sseh I_N+\ssah Z_a{Z_a}^\tran)^{-1} X \notag \\
&= \dfrac{X^\tran X}{\sseh}-\dfrac{\ssah}{\sseh}X^\tran Z_a\, \diag\Bigl(\dfrac{1}{\sseh+\ssah\nid}\Bigr)Z_a^\tran X \notag \\
&= \dfrac{1}{\sseh}\sum_{ij}\zij\xij\xij^\tran-\dfrac{\ssah}{\sseh}\sum_i\dfrac{1}{\sseh+\ssah\nid}
\biggl(\sum_j\zij\xij\biggr)\biggl(\sum_j\zij\xij\biggr)^\tran.
\end{align*}
Likewise, $X^\tran\hat V_A^{-1}Y$
equals 
\[\dfrac{1}{\sseh}\sum_{ij}\zij\xij\yij-\dfrac{\ssah}{\sseh}\sum_i\dfrac{1}{\sseh+\ssah\nid}
\biggl(\sum_j\zij\xij\biggr)\biggl(\sum_j\zij\yij\biggr).\]

One pass over the data allows us to compute $\sum_{ij}\zij\xij\xij^\tran$ and $\sum_{ij}\zij\xij\yij$, as well as $\nid$, and the row sums $\sum_j\zij\xij$ and $\sum_j\zij\yij$ for $i=1,\dots,R$. The cost is $O(Np^2)$ time and $O(Rp)$ space. None of these quantities require us to sort the data. We then compute $X^\tran \hat V_A^{-1}X$ and $X^\tran \hat V_A^{-1}Y$ in time $O(Rp^2)$. Then, $\betah_{\rls}$ is computed in $O(p^3)$. Hence, $\betah_{\rls}$ can be found within $O(Rp)$ space and  $O(Np^2+p^3)=O(Np^2)$ time.
Clearly $\betah_{\cls}$ costs $O(Cp)$ space and $O(Np^2)$ time.


\paragraph{Efficiencies:} 

We can compute either  $\betah_{\rls}$ or $\betah_{\cls}$ in our computational budget.  
We will choose RLS if the variance component associated with rows is dominant
and CLS otherwise.  The choice could be made dependent on $X$ but in many
applications one considers numerous different $X$ matrices and we prefer
to have a single choice for all regressions.  Accordingly, we find a lower bound on the efficiency
of RLS when $X$ is a single nonzero vector $\bsx\in\real^{N\times 1}$.  We choose RLS if that lower bound is higher
than the corresponding bound for CLS, in this $p=1$ setting.

The full GLS estimator is
$\betah_{\gls} = (X^\tran V_R^{-1}X)^{-1}X^\tran V_R^{-1}Y$
 when the data are ordered by rows and $(X^\tran V_C^{-1}X)^{-1}X^\tran V_C^{-1}Y$ when the data are ordered by columns. 
For data ordered by rows, the efficiency of $\betah_{\rls}$ is
\begin{align}\label{eq:effrls}
\eff_{\rls} &= \dfrac{\var(\betah_{\gls})}{\var(\betah_{\rls})} = \dfrac{(\bsx^\tran V_A^{-1}\bsx)^2}{(\bsx^\tran V_A^{-1}V_RV_A^{-1}\bsx)(\bsx^\tran V_R^{-1}\bsx)}.
\end{align}
For data ordered by columns, the corresponding efficiency of $\betah_{\cls}$ is
\begin{align}\label{eq:effcls}
\eff_{\cls} &= \dfrac{\var(\betah_{\gls})}{\var(\betah_{\cls})} = \dfrac{(\bsx^\tran V_B^{-1}\bsx)^2}{(\bsx^\tran V_B^{-1}V_CV_B^{-1}\bsx)(\bsx^\tran V_C^{-1}\bsx)}.
\end{align}
The next two theorems establish lower bounds on these efficiencies.

\begin{thm}\label{thm:kantorovich}
Let $A$ be a positive definite Hermitian matrix and $\bsu$ be a unit vector. 
If the eigenvalues of $A$ are bounded below by $m>0$ and above by $M<\infty$, then
\[(\bsu^\tran A\bsu)(\bsu^\tran A^{-1}\bsu)\le\dfrac{(m+M)^2}{4mM}.\]
Equality may hold, for example when $\bsu^\tran A\bsu=(M+m)/2$ and the only roots of $A$ are $m$ and $M$.
\end{thm}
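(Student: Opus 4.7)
The plan is to reduce the matrix inequality to a scalar weighted inequality by diagonalizing $A$, and then to exploit the quadratic inequality $(\lambda-m)(\lambda-M)\le 0$ in combination with AM-GM. This is the classical Kantorovich inequality, and the slickest route runs as follows.

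First, by the spectral theorem for Hermitian matrices, write $A=Q\Lambda Q^*$, where $\Lambda=\diag(\lambda_1,\ldots,\lambda_n)$ with $m\le\lambda_i\le M$, and $Q$ is unitary. Setting $\bsv=Q^*\bsu$, we have $\|\bsv\|=1$, and so with $p_i\equiv|v_i|^2$,
\[
\bsu^\tran A\bsu=\sum_{i}\lambda_i p_i\equiv\alpha,\qquad \bsu^\tran A^{-1}\bsu=\sum_i\frac{p_i}{\lambda_i}\equiv\beta,
\]
with $p_i\ge 0$ and $\sum_i p_i=1$. The claim thus reduces to showing $\alpha\beta\le (m+M)^2/(4mM)$.

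Second, for any $\lambda\in[m,M]$ we have $(\lambda-m)(\lambda-M)\le 0$, i.e.\ $\lambda^2+mM\le(m+M)\lambda$, equivalently
\[
\lambda+\frac{mM}{\lambda}\le m+M.
\]
Taking the weighted sum with weights $p_i$ across $\lambda=\lambda_i$ yields
\[
\alpha+mM\beta\le m+M.
\]

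Third, apply the AM-GM inequality to the left side: $\alpha+mM\beta\ge 2\sqrt{mM\alpha\beta}$. Combining,
\[
2\sqrt{mM\alpha\beta}\le m+M,
\]
and squaring and rearranging gives the desired bound. For the equality case, AM-GM equality requires $\alpha=mM\beta$, while the pointwise inequality $\lambda+mM/\lambda\le m+M$ is tight only at $\lambda\in\{m,M\}$. Taking $\bsu$ supported on the $m$- and $M$-eigenspaces with equal weight $p_m=p_M=1/2$ satisfies both conditions, yielding $\alpha=(m+M)/2$ and matching the stated equality case.

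The proof is essentially free of obstacles once the quadratic-inequality trick is in hand; the only conceptual step worth highlighting is the observation that reducing to eigenvalues immediately converts the operator statement into a one-dimensional moment inequality on probability measures supported in $[m,M]$, for which the identity $\lambda+mM/\lambda\le m+M$ is the natural linearization of the convex function $1/\lambda$.
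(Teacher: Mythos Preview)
Your proof is correct and is the standard classical argument for the Kantorovich inequality. The paper itself does not actually prove this theorem; it simply cites it as Kantorovich's inequality from \cite{MO90}. So there is nothing to compare against beyond noting that you have supplied a self-contained proof where the paper defers to the literature.
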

\begin{proof}
This is Kantorovich's inequality \citep{MO90}.
\end{proof}

By two applications of Theorem~\ref{thm:kantorovich} on \eqref{eq:effrls} and \eqref{eq:effcls}
we prove:
\begin{thm}\label{thm:worstcaseeff}
For $p=1$ and $\sse>0$, let $\eff_{\rls}$ and $\eff_{\cls}$ be defined as in \eqref{eq:effrls} and \eqref{eq:effcls}. Then
\begin{align*}
\eff_{\rls} &\ge \dfrac{4\sse(\sse+\ssb\max_j\ndj)}{(2\sse+\ssb\max_j\ndj)^2} \quad\text{and}\\
\eff_{\cls} &\ge \dfrac{4\sse(\sse+\ssa\max_i\nid)}{(2\sse+\ssa\max_i\nid)^2}.
\end{align*}
Both inequalties are tight.
\end{thm}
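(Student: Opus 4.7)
The plan is to put each efficiency into the form $1/\bigl((\hat\bsu^\tran M\hat\bsu)(\hat\bsu^\tran M^{-1}\hat\bsu)\bigr)$, to which Kantorovich's inequality (Theorem~\ref{thm:kantorovich}) applies directly, and then to bound the spectrum of $M$. For $\eff_{\rls}$ we change variables via $\bsu = V_A^{-1/2}\bsx$ and set $M = V_A^{-1/2}V_RV_A^{-1/2}$. The identity $V_R^{-1} = V_A^{-1/2}M^{-1}V_A^{-1/2}$ converts the three quadratic forms in \eqref{eq:effrls} into $\bsu^\tran\bsu$, $\bsu^\tran M\bsu$, and $\bsu^\tran M^{-1}\bsu$, so that with $\hat\bsu = \bsu/\|\bsu\|$,
\[
\eff_{\rls} = \bigl[(\hat\bsu^\tran M\hat\bsu)(\hat\bsu^\tran M^{-1}\hat\bsu)\bigr]^{-1}.
\]
Whenever the spectrum of $M$ lies in $[\mu_1,\mu_2]$, Theorem~\ref{thm:kantorovich} then yields $\eff_{\rls}\ge 4\mu_1\mu_2/(\mu_1+\mu_2)^2$.

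Next we bound the spectrum. Since $V_R = V_A + \ssb B_R$,
\[
M = I_N + \ssb V_A^{-1/2}B_RV_A^{-1/2},
\]
and the perturbation is positive semidefinite, so the smallest eigenvalue of $M$ is at least $1$. For the upper bound, $V_A\succeq \sse I_N$ gives $v^\tran V_A v\ge \sse\|v\|^2$ for every $v$; substituting this into the Rayleigh-quotient characterization yields
\[
\lambda_{\max}\bigl(V_A^{-1/2}B_RV_A^{-1/2}\bigr)\le \lambda_{\max}(B_R)/\sse = \max_j\ndj/\sse,
\]
where the last equality uses $B_R = PB_CP^\tran$ together with the block-diagonal structure of $B_C$, whose rank-one blocks $1_{\ndj}1_{\ndj}^\tran$ have top eigenvalue $\ndj$. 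Hence $\lambda_{\max}(M)\le 1+\ssb\max_j\ndj/\sse$. Plugging $\mu_1 = 1$ and $\mu_2 = 1+\ssb\max_j\ndj/\sse$ into the Kantorovich bound and clearing by $\sse^2$ produces the stated lower bound on $\eff_{\rls}$. The $\eff_{\cls}$ bound follows by the mirror argument, swapping $V_A\leftrightarrow V_B$, $V_R\leftrightarrow V_C$, $B_R\leftrightarrow A_C$, $\ssb\leftrightarrow\ssa$, and $\max_j\ndj\leftrightarrow\max_i\nid$ throughout.

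The main obstacle is the eigenvalue upper bound: using $\lambda_{\max}(V_A^{-1})\lambda_{\max}(B_R)$ discards any cancellation arising from the interaction of the row and column structures and can be loose in general, yet this crude bound is precisely what yields the clean closed form asserted in the theorem. For tightness we would examine the degenerate case $\ssa = 0$, so that $V_A = \sse I_N$ and $M$ has only two distinct eigenvalues---namely $1$ on the null space of $B_R$ and $1+\ssb\max_j\ndj/\sse$ on a top eigenvector of $B_R$---and then choose $\bsx$ so that $\hat\bsu$ places equal weight on these two eigenspaces, which is exactly the equality condition in Theorem~\ref{thm:kantorovich}.
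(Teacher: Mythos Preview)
Your proof is correct and follows the same route the paper indicates: one application of Kantorovich's inequality to each of \eqref{eq:effrls} and \eqref{eq:effcls} after the change of variables $\bsu=V_A^{-1/2}\bsx$ (respectively $V_B^{-1/2}\bsx$), together with the eigenvalue bounds on $M=I_N+\ssb V_A^{-1/2}B_RV_A^{-1/2}$. The only imprecision is in the tightness sketch: with $\ssa=0$ the matrix $M$ has eigenvalues $1+\ssb\ndj/\sse$ for each column size present, not just two values, so you should either take a configuration with all $\ndj$ equal (and $N>C$ so that the eigenvalue $1$ is actually attained), or note that it suffices to choose $\hat\bsu$ supported only on the eigenspaces for the extremal eigenvalues $1$ and $1+\ssb\max_j\ndj/\sse$.
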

\begin{proof}
See Section~\ref{sec:proofworsteff} in the supplement.
\end{proof}

After some algebra, we see that the worst case efficiency of $\betah_{\rls}$ is higher than that of $\betah_{\cls}$ when $\ssa \max_i \nid > \ssb \max_j \ndj$. We set $\betah$ to be $\betah_{\rls}$ when  $\ssah\max_i\nid \ge \ssbh\max_j\ndj$, and $\betah_{\cls}$ otherwise. 

Optimizing a lower bound does not necessarily optimize the quantity
of interest, and so we expect that our choice here is not the
only reasonable one.
The efficiency of $\betah_{\rls}$ depends only on the ratio $\ssah/\sseh$
in use. We investigated GLS estimators of $\beta$ based on 
$\hat V_A=\ssah A_R +(\sseh+\lambda\ssbh)I_N$ for $\lambda$
chosen by the Kantorovich inequality.
It did not appear to bring an improved accuracy over our default choice
in some simulations.
In practice, one can also compute both $\betah_{\rls}$ and
$\betah_{\cls}$ and compare $\wh\var(\betah_{\rls})$
and  $\wh\var(\betah_{\cls})$.

\subsubsection{Steps 4 and 5}\label{sec:step4}
Step 4 is just like step 2 and it costs $O(Np)$ time.
Step 5 is described in Section~\ref{sec:varbetahrls} where
we derive $\var(\betah_\rls)$ and $\var(\betah_\cls)$.

\subsection{Method of Moments (Steps 2 and 4)}\label{sec:mom}

In this subsection, we discuss steps $2$ and $4$ of Algorithm~\ref{alg:alternating} in more detail. The errors $\yij-\xij^\tran\beta$ follow a two-factor crossed random effects model \citep{GO17}. If $\betah$ is a good estimate of $\beta$, then the residuals $\hat\eta_{ij}=\yij-\xij^\tran\betah$ approximately follow a two-factor crossed random effects model with $\mu=0$ and variance components $\ssa$, $\ssb$, and $\sse$. 

We estimate $\ssa$, $\ssb$, and $\sse$, with the algorithm from \cite{GO17} with data $(i,j,\hat\eta_{ij})$. That algorithm gives unbiased estimates of the variance components in a two-factor crossed random effects model. 

The algorithm of \cite{GO17} applies the method of moments to three statistics; 
a weighted sum of within-row sample variances, a weighted sum of within-column 
sample variances, and a multiple of the full sample variance.
For Algorithm~\ref{alg:alternating}, these are:
\begin{align}\label{eq:Ustats}
\begin{split} 
U_a(\betah) & = \sum_i S_{i\sumdot},\quad S_{i\sumdot} = \sum_j\zij(\hat\eta_{ij}-
\overline{\hat\eta}_{i\sumdot})^2\\
U_b(\betah) &= \sum_j S_{\sumdot j},\quad S_{\sumdot j} = \sum_i\zij(\hat\eta_{ij}-
\overline{\hat\eta}_{\sumdot j})^2,\quad\text{and}\\
U_e(\betah) &= \sum_{ij}\zij(\hat\eta_{ij}-\overline{\hat\eta}_{\sumdot\sumdot})^2,
\end{split}
\end{align}  
where subscripts replaced by $\sumdot$ are averaged over.
The variance component estimates are obtained by solving the system
\begin{align}\label{eq:mom} 
M\begin{pmatrix}\ssah \\ \ssbh \\ \sseh \end{pmatrix} &= \begin{pmatrix}U_a(\betah) \\ U_b(\betah) \\ U_e(\betah) \end{pmatrix},\  
M=\begin{pmatrix}0 & N-R & N-R \\ N-C & 0 & N-C \\ N^2-\sum_i \nid^2 & N^2-\sum_j \ndj^2 & N^2-N\end{pmatrix}.  
\end{align} 
The matrix $M$ is nonsingular under very weak conditions.
It suffices to have $R\ge2$, $C\ge2$, $\epsilon_R\le1/2$ and $\epsilon_C\le1/2$
\citep[Section 4.1]{GO17}.

\cite{GO17} compute the $U$-statistics in one pass over the data taking $O(N)$ time and $O(R+C)$ space. Solving \eqref{eq:mom} takes constant time. Thus, steps $2$ and $4$ each have computational complexity $O(N)$ and space complexity $O(R+C)$.

\section{Stitch Fix rating data}\label{sec:stitchfix}


Stitch Fix sells clothing, mostly women's clothing.  They mail their clients a sample of clothing items.
A client keeps and purchases some items and returns the others.   It is important to predict which items a client will like. In the context of our model, client $i$ might get item $j$ and then rate that item with a score $Y_{ij}$. 

Stitch Fix has provided us some of their client ratings data. This data is fully anonymized and void of any personally identifying information. The data  provided by Stitch Fix is a sample of their data, and consequently does not reflect their actual numbers of clients, items or their ratios, for example. Nonetheless this is an interesting data set with which to illustrate a linear mixed effects model. 

We received data on clients' ratings of items they received, as well as the following information about the clients and items. For client $i$ and item $j$, the response is a composite rating $Y_{ij}$ on a scale from $1$ to $10$.
There was a categorical variable giving the item's material. 
We also received a binary variable indicating whether the item style is considered to be `edgy', and another one on whether the client likes edgy styles. Similarly, there was another pair of binary variables indicating whether items were labeled `boho' (Bohemian) and whether the client likes boho items. 
 Finally, there was a match score. That is an estimate of the probability that the client keeps the item, predicted before it is actually sent. The match score is a prediction from a baseline model and is not representative of all algorithms used at Stitch Fix. 

The observation pattern in the data is as follows.
We received $N=5{,}000{,}000$ ratings on $C=6{,}318$ items by $R=762{,}752$ clients. Thus $C/N\doteq 0.00126$ and $R/N\doteq 0.153$. The latter ratio indicates that only a relatively small number of ratings from each client are included in the data (their full shipment history is not included in the sampled data). The data are not dominated by a single row or column because $\epsilon_R\doteq 9\times 10^{-6}$ and $\epsilon_C \doteq 0.0143$. Similarly 
\begin{align*}
\frac{N}{\sum_i\nid^2} & \doteq 0.103, && \frac{\sum_i\nid^2}{N^2} \doteq 1.95\times 10^{-6},\\
\frac{N}{\sum_j\ndj^2} & \doteq 1.22\times 10^{-4},\quad\text{and}&& \frac{\sum_j\ndj^2}{N^2} \doteq 0.00164. 
\end{align*}

Our two-factor linear mixed effects model for this data is:
\begin{model}\label{mod:stitchfixmodel}
For client $i$ and item $j$,
\begin{align}
\text{rating}_{ij} & =  \beta_0+\beta_1\text{match}_{ij}+\beta_2\mathbb{I}\{\text{client edgy}\}_i+\beta_3\mathbb{I}\{\text{item edgy}\}_j \notag \\
&\phe +  \beta_4\mathbb{I}\{\text{client edgy}\}_i*\mathbb{I}\{\text{item edgy}\}_j+\beta_5\mathbb{I}\{\text{client boho}\}_i \notag \\
&\phe + \beta_6\mathbb{I}\{\text{item boho}\}_j+\beta_7\mathbb{I}\{\text{client boho}\}_i*\mathbb{I}\{\text{item boho}\}_j \notag \\
&\phe + \beta_8\text{material}_{ij}+a_i+b_j+e_{ij}. \notag 
\end{align}
Here $\text{material}_{ij}$ is a categorical variable that is implemented via indicator variables for each type of material. We chose `Polyester', the most common material, to be the baseline. 
\end{model}

In a regression analysis, Model~\ref{mod:stitchfixmodel} would be only one
of many models one might consider.  There would be numerous ways to encode
the variables, and the coefficients in any one model would depend on which other
variables were included. The odds of settling on exactly this model
are low. To keep the focus on estimated standard errors due
to variance components we will work with a naive face-value interpretation
of the coefficients $\beta_j$ in Model~\ref{mod:stitchfixmodel}.
If the emphasis is on prediction, then one can use $\xij^\tran\betah$
perhaps adding shrunken row and/or column means of
the residuals. \cite{GO17} consider how estimates of $\ssa$, $\ssb$, and
$\sse$ can be used to shrink row and/or column means.

\begin{table}[t]
\caption{Stitch Fix Regression Results}
\label{tab:stitchfixresults}
\begin{tabular}{lllllll}  
\toprule 
& \multicolumn{1}{c}{$\betah_{\ols}$} & \multicolumn{1}{c}{$\wh{\mathrm{se}}_{\ols}(\betah_{\ols})$} & $\wh{\mathrm{se}}(\betah_{\ols})$ & \multicolumn{1}{c}{$\betah$} & \multicolumn{1}{c}{$\wh{\mathrm{se}}(\betah)$} \\
\midrule 
Intercept & $\phm 4.635^*$ & $0.005397$ & $0.05808$ & $\phm 5.110^*$ & $0.01250$ \\
	Match & $\phm5.048^*$ & $0.01174$ & $0.1464$ & $\phm 3.529^*$ & $0.02153$ \\
	$\mathbb{I}\{\text{client edgy}\}$ & $\phm 0.001020$ & $0.002443$ & $0.004593$ & $\phm 0.001860$ & $0.003831$ \\
	$\mathbb{I}\{\text{item edgy}\}$ & $-0.3358^*$ & $0.004253$ & $0.03730$ & $-0.3328^*$ & $0.01542$ \\
	$\mathbb{I}\{\text{client edgy}\}$ \\ $\hspace{5mm}{*}\mathbb{I}\{\text{item edgy}\}$ & $\phm 0.3925^*$ & $0.006229$ & $0.01352$ & $\phm 0.3864^*$ & $0.006432$ \\
	$\mathbb{I}\{\text{client boho}\}$ & $\phm 0.1386^*$ & $0.002264$ & $0.004354$ & $\phm 0.1334^*$ & $0.003622$ \\
	$\mathbb{I}\{\text{item boho}\}$ & $-0.5499^*$ & $0.005981$ & $0.03049$ & $-0.6261^*$ & $0.01661$ \\
	$\mathbb{I}\{\text{client boho}\}$ \\ $\hspace{5mm}*\mathbb{I}\{\text{item boho}\}$ & $\phm 0.3822^*$ & $0.007566$ & $0.01057$ & $\phm 0.3837^*$ & $0.007697$ \\
	Acrylic & $-0.06482^*$ & $0.003778$ & $0.03804$ & $-0.01627$ & $0.02149$ \\
	Angora & $-0.01262$ & $0.007848$ & $0.09631$ & $\phm 0.07271$ & $0.05837$ \\
	Bamboo & $-0.04593$ & $0.06215$ & $0.2437$ & $\phm 0.05420$ & $0.1716$ \\
	Cashmere & $-0.1955^*$ & $0.02484$ & $0.1593$ & $\phm 0.01354$ & $0.1176$ \\
	Cotton & $\phm 0.1752^*$ & $0.003172$ & $0.04766$ & $\phm 0.09743^*$ & $0.01811$ \\
	Cupro & $\phm0.5979^*$ & $0.3016$ & $0.4857$ & $\phm 0.5603$ & $0.4852$ \\
	Faux Fur & $\phm 0.2759^*$ & $0.02008$ & $0.08631$ & $\phm 0.3649^*$ & $0.07524$ \\
	Fur & $-0.2021^*$ & $0.03121$ & $0.1560$ & $-0.03478$ & $0.1331$ \\
	Leather & $\phm 0.2677^*$ & $0.02482$ & $0.08671$ & $\phm 0.2798^*$ & $0.07335$ \\
	Linen & $-0.3844^*$ & $0.05632$ & $0.2729$ & $\phm 0.006269$ & $0.1660$ \\
	Modal & $\phm 0.002587$ & $0.009775$ & $0.2052$ & $\phm 0.1417^*$ & $0.06498$ \\
	Nylon & $\phm 0.03349^*$ & $0.01552$ & $0.1000$ & $\phm 0.1186$ & $0.06436$ \\
	Patent Leather & $-0.2359$ & $0.1800$ & $0.4235$ & $-0.2473$ & $0.4222$ \\
	Pleather & $\phm 0.4163^*$ & $0.008916$ & $0.09905$ & $\phm 0.3344^*$ & $0.05023$ \\
	PU & $\phm 0.4160^*$ & $0.008225$ & $0.09019$ & $\phm 0.4951^*$ & $0.04196$ \\
	PVC & $\phm 0.6574^*$ & $0.06545$ & $0.3898$ & $\phm 0.8713^*$ & $0.3883$ \\
	Rayon & $-0.01109^*$ & $0.002951$ & $0.04602$ & $\phm 0.01029$ & $0.01493$ \\
	Silk & $-0.1422^*$ & $0.01317$ & $0.1004$ & $-0.1656^*$ & $0.05471$ \\
	Spandex & $\phm 0.3916^*$ & $0.01729$ & $0.1549$ & $\phm 0.3631^*$ & $0.1284$ \\
	Tencel & $\phm 0.4966^*$ & $0.009313$ & $0.1935$ & $\phm 0.1548^*$ & $0.06718$ \\ 
	Viscose & $\phm 0.04066^*$ & $0.006953$ & $0.09620$ & $-0.01389$ & $0.03527$ \\
	Wool & $-0.06021^*$ & $0.006611$ & $0.08141$ & $-0.006051$ & $0.03737$ \\
\bottomrule 
\end{tabular}
\end{table}

Suppose that one ignored client and item random effects and simply ran OLS. 
The resulting reported regression coefficients and standard errors are shown in the first two columns of Table~\ref{tab:stitchfixresults}. 
Estimated coefficients are starred if they would have been 
reported as being significant at the $0.05$ level. 
The third column has more realistic standard errors of the OLS regression coefficients, accounting for both the client and item random effects. These standard errors were computed using the variance component estimates from our algorithm
as described in Section~\ref{sec:varbetahrls}. As expected, they can be much larger, often by a factor of ten, than the OLS reported standard errors. 
A ten-fold increase in standard error corresponds to a hundred-fold
decrease in effective sample size.

In our simple model,
ten of the variables, `Acrylic', `Cashmere', `Cupro', `Fur', `Linen', `PVC', `Rayon', `Silk', `Viscose', and `Wool', that appear significantly different from polyester by OLS are not really significant when one accounts for client and item correlations.
An OLS analysis would lead to decisions being made with misplaced confidence.
It is likely that industry uses more elaborate models than our simple
regression, but a lower than anticipated effective sample size will
remain an issue.

The final two columns contain the regression coefficients estimated by our algorithm and their standard errors as defined in Section~\ref{sec:betaclt}. Again, estimated coefficients are starred if they are significant at the $0.05$ level. The estimated variance components are $\ssah=1.133$, $\ssbh=0.1463$, and $\sseh=4.474$. Their standard errors are approximately
$0.0046$, $0.00089$, and $0.0050$ respectively, so these components are well determined.
The error variance component is largest, and the client effect dominates the item effect
by almost a factor of eight.

The `Match' variable is significantly positively associated with rating, indicating that the baseline prediction provided by Stitch Fix is a useful predictor in this data set. However the random effects model reduces its coefficient from about $5$ to about $3.5$, a change that is quite a large number of estimated standard errors.  We have seen that some clients tend to give higher ratings on average than others. That is, client indicator variables take away some of the explanatory power of the match variable. 


Shipping an edgy item to a client who does not like edgy styles is associated with a rating decrease of about $0.33$ points, but shipping such an item to a client who does like edgy styles is associated with a small increase in rating. 

The boho indicator variable also has a negative overall 
estimated coefficient $\hat\beta_6<0$.
The modeled impact of a boho item sent to a boho client is
$\hat\beta_5+\hat\beta_6+\hat\beta_7<0$, unlike the positive
result we saw for sending and edgy item to an edgy client.
This suggests that it is more difficult to make matches for                         
boho items.  Perhaps there is an interaction where `boho to boho'
has a positive impact for a sufficiently high value of the match variable.
For large data sets, such an interaction can be conveniently
handled by filtering the data to cases with Match$_{ij}\ge t$
and refitting.  We did so but did not find a threshold
that yielded $\hat\beta_6+\hat\beta_5+\hat\beta_7 >0$.

Of the materials, `Cotton', `Faux Fur', `Leather', `Modal', `Pleather', `PU', `PVC', `Silk', `Spandex', and `Tencel' are significantly different from the baseline, `Polyester in our crossed random effects model. `PU' and `PVC' are associated with an increase in rating of at least half a point. Those materials are often used to make shoes and specialty clothing, which may be related to their association with high ratings. 

The computations in this section were done in python.

\section{Asymptotic behavior}\label{sec:theory}

Here we give sufficient conditions to ensure that the parameter estimates $\betah$, $\ssah$, $\ssbh$, and $\sseh$ obtained from Algorithm~\ref{alg:alternating} 
are consistent.  We also give a central limit theorem for $\betah$. 
We use the sample size growth conditions from Section~\ref{sec:background}
and some additional conditions on $\xij$.
Our results are conditional on the observed predictors $\xij$ for which $\zij=1$. 


As in ordinary IID error regression problems 
our central limit theorem
requires the information in the observed $\xij$ to
grow quickly in every projection while also imposing
a limit on the largest $\xij$.
For each $i$ with $\nid>0$, let $\xbid$ be the average of those
$\xij$ with $\zij=1$ and similarly define column
averages~$\xbdj$.

For a symmetric positive semi-definite matrix $V$, let $\info(V)$
be the smallest eigenvalue of $V$.
We will need lower bounds on  $\info(V)$
for various $V$ to rule out singular or nearly singular designs.  
Some of those $V$ involve centered variables.
In most applications $\xij$ will include
an intercept term, and so we assume that the first component
of every $\xij$ equals $1$.
That term raises some technical difficulties as centering that
component always yields zero.
We will treat that term specially in some of our proofs.
For a symmetric matrix $V\in\real^{p\times p}$, we let
$$\info_0(V) = \info( (V_{ij})_{2\le i,j\le p})$$ 
be the smallest eigenvalue of the lower 
$(p-1)\times (p-1)$ submatrix of $V$.

In our motivating applications, it is  reasonable to assume that $\Vert\xij\Vert$
are uniformly bounded.  We let
\begin{align}\label{eq:defmn}
M_N\equiv \max_{ij} \zij\Vert\xij\Vert^2
\end{align}
quantify the largest $\xij$ in the data so far.
Some of our results would still
hold if we were to let $M_N$ grow slowly with $N$. To
focus on the essential ideas, we simply take $M_N\le M_\infty <\infty$ for all $N$.

\subsection{Consistency}\label{sec:consistent}

First, we give conditions under which $\betah_{\ols}$ from step 1
is consistent.
\begin{thm}\label{thm:olscons}
Let $\max(\epsilon_R,\epsilon_C)\to 0$ and $\info(X^\tran X)\ge cN$ for some $c>0$,
as $N\to\infty$.
Then $\e(\Vert\betah_{\ols}-\beta\Vert^2)=O((\epsilon_R+\epsilon_C)/\info(X^\tran X/N)) \to0$.
\end{thm}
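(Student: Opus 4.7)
The plan is to reduce the claim to a trace computation involving $(X^\tran X)^{-1}$ and the covariance of the pooled error vector, then control that trace with a matrix version of Cauchy--Schwarz that exchanges a factor of $\max_i \nid$ (or $\max_j\ndj$) for the full Gram matrix $X^\tran X$.

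First, since $Y = X\beta + \eta$ with $\eta_{ij} = a_i + b_j + e_{ij}$ mean zero by Model~\ref{mod:twofactor}, the standard OLS identity gives $\betah_{\ols} - \beta = (X^\tran X)^{-1} X^\tran \eta$, so
\begin{align*}
\e\bigl(\Vert \betah_{\ols} - \beta\Vert^2\bigr)
& = \e\bigl[\eta^\tran X (X^\tran X)^{-2} X^\tran \eta\bigr]
  = \mathrm{tr}\bigl((X^\tran X)^{-2} X^\tran \cov(\eta)\, X\bigr).
\end{align*}
Writing the data in the row ordering and using~\eqref{eq:rowcov} gives $\cov(\eta) = \sse I_N + \ssa A_R + \ssb B_R$, so the trace splits into three nonnegative pieces.

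The main technical step is a single matrix inequality, applied once for $A_R$ and once for $B_R$. Because $A_R$ is block diagonal with blocks $1_{\nid}1_{\nid}^\tran$, we have $X^\tran A_R X = \sum_i (\sum_j \zij \xij)(\sum_j \zij \xij)^\tran$, and for any unit $v\in\real^p$, Cauchy--Schwarz gives
\begin{align*}
\Bigl(\sum_j \zij\, v^\tran \xij\Bigr)^2
\;\le\; \nid \sum_j \zij\, (v^\tran \xij)^2,
\end{align*}
so $X^\tran A_R X \preceq (\max_i \nid)\, X^\tran X = N\epsilon_R X^\tran X$. Likewise $X^\tran B_R X \preceq N\epsilon_C X^\tran X$. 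Plugging these back,
\begin{align*}
\mathrm{tr}\bigl((X^\tran X)^{-2} X^\tran (\ssa A_R + \ssb B_R) X\bigr)
& \le N(\ssa\epsilon_R + \ssb\epsilon_C)\,\mathrm{tr}\bigl((X^\tran X)^{-1}\bigr),
\end{align*}
while the identity piece contributes $\sse\,\mathrm{tr}((X^\tran X)^{-1})$. Since $\mathrm{tr}((X^\tran X)^{-1}) \le p/\info(X^\tran X) = p/(N\info(X^\tran X/N))$ and $1/N \le \epsilon_R$ whenever any $\nid\ge 1$, the three contributions combine to
\begin{align*}
\e\bigl(\Vert\betah_{\ols}-\beta\Vert^2\bigr)
&\;=\; O\!\left(\frac{\epsilon_R + \epsilon_C}{\info(X^\tran X/N)}\right),
\end{align*}
which tends to zero under the hypothesis $\info(X^\tran X) \ge cN$ and $\max(\epsilon_R,\epsilon_C) \to 0$.

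I do not expect any single step to be genuinely hard; the only place care is required is the matrix Cauchy--Schwarz bound $X^\tran A_R X \preceq (\max_i \nid) X^\tran X$, since a careless argument would introduce $\sum_i \nid^2$ or $N$ in the wrong place and miss the advertised rate. Everything else is bookkeeping: rewriting $\cov(\eta)$ using~\eqref{eq:rowcov}, applying $\mathrm{tr}(MN) \le \mathrm{tr}(M)\Vert N\Vert$ with $M = (X^\tran X)^{-1}$ and the Loewner order, and using the assumed lower bound on $\info(X^\tran X/N)$ to convert the trace of the inverse into the stated denominator. No moment assumption beyond $\var(\ai),\var(\bj),\var(\eij) <\infty$ is needed for this step, which is consistent with Theorem~\ref{thm:olscons} requiring only the conditions stated.
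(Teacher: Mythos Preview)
Your proposal is correct and is the natural route to this result. The key identity $\e(\Vert\betah_\ols-\beta\Vert^2)=\mathrm{tr}\bigl((X^\tran X)^{-2}X^\tran\cov(\eta)X\bigr)$ together with the Loewner bound $X^\tran A_RX\preceq (\max_i\nid)X^\tran X$ (and its column analogue) is exactly the mechanism one wants, and your Cauchy--Schwarz justification of that bound is clean. The bookkeeping that folds the $\sse$ term into the $O(\epsilon_R+\epsilon_C)$ rate via $1/N\le\epsilon_R$ is also fine. The paper's proof, deferred to the supplement, proceeds along essentially the same lines; there is really only one sensible way to control this trace, and you have found it.
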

\begin{proof}
See Section~\ref{sec:proofolscons} in the supplement.
\end{proof}

Second, we show that the variance component estimates computed in step 2
are consistent. 
Recall that we compute the $U$-statistics \eqref{eq:Ustats} on data $(i,j,\hat\eta_{ij}=\yij-\xij^\tran\betah)$ and use them to obtain estimates $\ssah$, $\ssbh$, and $\sseh$ via \eqref{eq:mom}. 


\begin{thm}\label{thm:momcons}
Suppose that as $N\to\infty$
that $\max(\epsilon_R,\epsilon_C)\to0$,
$\max(R,C)/N\le\theta\in(0,1)$,  $\betah \p \beta$, 
and that $M_N$ is bounded. Then
$\ssah \p \ssa$, $\ssbh \p \ssb$, and $\sseh \p \sse$.
\end{thm}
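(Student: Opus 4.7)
The plan is to decompose the error of $\hat\sigma := (\ssah,\ssbh,\sseh)^\tran = M^{-1}U(\betah)$, with $U(\cdot) = (U_a(\cdot), U_b(\cdot), U_e(\cdot))^\tran$. Introduce the \emph{oracle} estimates $\tilde\sigma := M^{-1}U(\beta)$ one would obtain by plugging the unobserved true errors $\eta_{ij} = \ai+\bj+\eij$ into the $U$-statistics, so that
$$\hat\sigma - \sigma = (\tilde\sigma - \sigma) + M^{-1}\bigl(U(\betah) - U(\beta)\bigr).$$
I will show that each of the two pieces is $o_P(1)$.

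For the oracle piece: the $\eta_{ij}$ form an intercept-free, mean-zero instance of Model~\ref{mod:twofactor}, and $\tilde\sigma$ is exactly the moment estimator of \cite{GO17} applied to $(i,j,\eta_{ij})$. The hypotheses of Theorem~\ref{thm:varcoefbiasandvar}, namely $\max(R,C)/N \le \theta < 1$ and $\epsilon := \max(\epsilon_R,\epsilon_C) \to 0$, are satisfied, giving each component of $\tilde\sigma$ bias of order $\Upsilon = O(\epsilon)$ and variance of order $O(\epsilon)$. Chebyshev's inequality then yields $\tilde\sigma \p \sigma$.

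For the plug-in piece, set $\delta = \betah - \beta = o_P(1)$. Since $\hat\eta_{ij} - \overline{\hat\eta}_{i\sumdot} = (\eta_{ij} - \bar\eta_{i\sumdot}) - (\xij - \xbid)^\tran\delta$, direct expansion gives
$$U_a(\betah) - U_a(\beta) = -2\delta^\tran L_a + \delta^\tran Q_a\delta,$$
with $L_a = \sum_{ij}\zij(\xij - \xbid)(\eta_{ij} - \bar\eta_{i\sumdot})$ and $Q_a = \sum_{ij}\zij(\xij - \xbid)(\xij - \xbid)^\tran$, and entirely analogous expressions for $U_b$ (column-centered) and $U_e$ (grand-centered). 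Using $M_N \le M_\infty$ together with the variance-minimizing property of sample means, $\mathrm{tr}(Q_a) \le \sum_{ij}\zij\Vert\xij\Vert^2 \le N M_\infty$, so the quadratic term is $o_P(N)$. The $a$-component of $\eta_{ij} - \bar\eta_{i\sumdot}$ is killed by row-centering; the $e$-contribution to $L_a$ has variance $\sse Q_a$ of trace at most $\sse N M_\infty$, and the $b$-contribution $\sum_j b_j\sum_i\zij(\xij-\xbid)$ has trace-variance bounded, via Cauchy-Schwarz and $\max_j\ndj \le \epsilon_C N$, by $\ssb\,\epsilon_C N^2 M_\infty$. Hence $\Vert L_a\Vert = O_P(\sqrt N + N\sqrt{\epsilon_C})$ and the cross term is $o_P(N)$ as well. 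The same reasoning applied to $U_b$ and $U_e$ yields $o_P$-bounds matching the dominant scale of the corresponding row of $M$.

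It remains to absorb the perturbations through $M^{-1}$. Let $D$ be a diagonal matrix matching the dominant scale of each row of $M$, so that $D^{-1}M$ has bounded entries. Under the two hypotheses, the first two diagonal-$D$ entries $N-R, N-C$ are $\ge (1-\theta)N$, while the third-row cross terms $\sum_i\nid^2/N^2$ and $\sum_j\ndj^2/N^2$ are bounded by $\epsilon_R$ and $\epsilon_C$ respectively and thus vanish. Consequently $D^{-1}M$ converges entrywise to a non-singular limit and $\Vert(D^{-1}M)^{-1}\Vert = O(1)$. Writing
$$\hat\sigma - \tilde\sigma = (D^{-1}M)^{-1}\,D^{-1}\bigl(U(\betah) - U(\beta)\bigr)$$
and noting that each component of $D^{-1}(U(\betah)-U(\beta))$ is $o_P(1)$ by the previous paragraph, we conclude $\hat\sigma - \tilde\sigma \p 0$, which combined with the oracle convergence completes the proof. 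The main obstacle is the last step's bookkeeping: one must verify that the rescaled system stays non-degenerate using both the ``no-hogs'' condition $\epsilon \to 0$ and the ``bounded density'' condition $\max(R,C)/N \le \theta < 1$, and carefully track the shared-$b$ structure that makes the $L_a$ term merely $o_P(N)$ rather than the naive $O_P(\sqrt N)$ one would expect in a purely independent setting.
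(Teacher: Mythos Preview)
Your proposal is correct and follows essentially the paper's own route: decompose into an oracle piece $\tilde\sigma-\sigma$ handled by Theorem~\ref{thm:varcoefbiasandvar} plus a plug-in perturbation $M^{-1}\bigl(U(\betah)-U(\beta)\bigr)$ expanded quadratically in $\delta=\betah-\beta$, exactly the argument the paper summarizes in the remark following the theorem. One minor fix: $D^{-1}M$ need not \emph{converge} entrywise, since $R/N$ and $C/N$ are only assumed bounded above by $\theta<1$, not convergent; what you actually need (and have) is that its entries are bounded and its determinant is at least $(1-\theta)^2(1-2\epsilon)$, which already gives $\Vert(D^{-1}M)^{-1}\Vert=O(1)$.
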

\begin{proof}
See Section~\ref{sec:proofmomcons} in the supplement.
\end{proof}

From Theorem~\ref{thm:olscons}, the estimate of $\beta$ obtained in step $1$ of Algorithm~\ref{alg:alternating} is consistent. 
Therefore, from Theorem~\ref{thm:momcons}, the variance component estimates obtained in step $2$ are consistent, 
given the combined assumptions of those two theorems.
The proof of Theorem~\ref{thm:olscons} shows that 
the estimated  variance components differ by
 $O(\Vert\betah-\beta\Vert^2 +\epsilon \Vert\betah-\beta\Vert)$
from what we would get replacing $\betah$ by an oracle value $\beta$
and computing variance components of $\yij-\xij^\tran\beta$.
Such an estimate would have mean squared error $O(\epsilon)$
by Theorem~\ref{thm:varcoefbiasandvar}. As a result
the mean squared error for all parameters of interest is
$O(\epsilon)$.

Our third result  shows that the estimate of $\beta$ obtained in step $3$ is consistent. We do so by showing that estimators $\betah_{\rls}$ and $\betah_{\cls}$ are consistent when constructed using consistent variance component estimates. We give the version for  $\betah_{\rls}$.

\begin{thm}\label{thm:rlscons}
Let $\betah_{\rls}$ 
be computed with $\ssah \p \ssa$
and $\sseh \p \sse$ as $N\to\infty$,
where $\sse>0$.
If $\max(\epsilon_R,\epsilon_C)\to0$ and, 
\begin{align}\label{eq:ctrdsumsamplecov}
\info_0\Bigl(\sum_{ij} \zij (\xij-\xbid) (\xij-\xbid)^\tran/N\Bigr)\ge c>0
\end{align}
and
\begin{align}\label{eq:effnumcols}
\frac1{R^2}\sum_{ir}\zzt_{ir}\nid^{-1}\nrd^{-1}\to 0,
\end{align}
then $\betah_{\rls} \p \beta$.
\end{thm}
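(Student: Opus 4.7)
The plan is to write $\betah_\rls - \beta = W^{-1} T$, with $W = X^\tran \hat V_A^{-1} X$ and $T = X^\tran \hat V_A^{-1} \eta$ where $\eta_{ij} = a_i + b_j + e_{ij}$, and to control the two factors by Chebyshev-type arguments exploiting Woodbury-based algebra for $\hat V_A = \sseh I + \ssah A_R$.

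Applying Woodbury and rearranging as in Section~\ref{sec:step3}, one obtains
\[ W = \frac{1}{\sseh} \sum_{ij}\zij (\xij-\xbid)(\xij-\xbid)^\tran + \sum_i \frac{\nid}{\sseh+\ssah\nid}\xbid\xbid^\tran. \]
Since $\xij$ contains the intercept, $(\xij-\xbid)_1=0$, so the first term annihilates the intercept direction while the second term (with $(\xbid)_1=1$) recovers it. Splitting a unit vector $u$ into cases based on the relative size of $u_1$ versus the remainder, and combining \eqref{eq:ctrdsumsamplecov} on the off-intercept part with the uniform bound $M_N \le M_\infty$ in the intercept part, yields $\info(W/N) \ge c' \min\bigl(1, (1/N)\sum_i \nid/(\sseh+\ssah\nid)\bigr)$, which is positive with high probability since $\ssah \p \ssa$, $\sseh \p \sse > 0$, and $\epsilon_R \to 0$.

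For the variance of $T$, decompose $\eta$ into row, column, and error parts and correspondingly write $T = T_a + T_b + T_e$. Since $\hat V_A$ commutes with $A_R$, the traces $E[\Vert T_a\Vert^2]$ and $E[\Vert T_e\Vert^2]$ are each $O(N)$ using $\Vert \hat V_A^{-1} A_R \hat V_A^{-1} \Vert = O(1)$, $\Vert \hat V_A^{-2}\Vert \le 1/\sseh^2$, and $M_N \le M_\infty$. The delicate term is $T_b = \sum_j b_j g_j$ with $g_j = X^\tran \hat V_A^{-1} z_{\cdot j}$, where $z_{\cdot j}\in\real^N$ is the column-$j$ incidence vector. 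Using the identity $(\hat V_A^{-1} v)_{(i,j')} = \sseh^{-1}(v_{(i,j')} - \bar v_{i\sumdot}) + \bar v_{i\sumdot}/(\sseh+\ssah\nid)$ applied to $v=z_{\cdot j}$, a short computation gives
\[ g_j = \frac{1}{\sseh}\sum_i \zij(\xij-\xbid) + \sum_i \frac{\zij\,\xbid}{\sseh+\ssah\nid}. \]
Taking squared norms and summing in $j$, the first summand contributes $O(\epsilon_C N^2)$ by Cauchy--Schwarz together with $\sum_j \ndj^2 \le \epsilon_C N^2$; the second summand is bounded by $M_\infty\sum_j q_j^2$ with $q_j = \sum_i \zij/(\sseh+\ssah\nid) \le \ssah^{-1}\sum_i \zij/\nid$, so $\sum_j q_j^2 \le \ssah^{-2}\sum_{ir}\zzt_{ir}/(\nid\nrd) = o(R^2)$ by \eqref{eq:effnumcols}.

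Finally, assembling these via $\var(\betah_\rls - \beta) = W^{-1}\cov(T)W^{-1}$ and Chebyshev, the non-intercept directions are routine because the centered sample covariance dominates there. The main obstacle is the intercept direction, where $\info(W)$ can be as small as $N/(\sseh+\ssah\max_i\nid)$; condition \eqref{eq:effnumcols} appears calibrated precisely so that the intercept-direction inflation of the first column of $W^{-1}$ is cancelled by the shrinkage of $\cov(T)_{11} \asymp \ssb\sum_j q_j^2$ coming from the column random effects. A Schur-complement analysis of $W^{-1}$ that isolates the intercept slot from the remaining coordinates, combined with Slutsky to replace $\ssah$ and $\sseh$ by their probability limits, should then deliver $\betah_\rls \p \beta$.
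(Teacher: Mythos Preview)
Your proposal follows the same route as the paper's proof: the Woodbury decomposition of $W=X^\tran\hat V_A^{-1}X$ into a row-centered piece plus a between-row piece, the three-way split $T=T_a+T_b+T_e$, the explicit formula for $g_j$, and the recognition that the only delicate term is the column contribution $T_b$ along the intercept direction, controlled by~\eqref{eq:effnumcols}. The intermediate identities you state are all correct.

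The one place that still needs work is the final Schur-complement step, which you flag with ``should then deliver.'' What you need is $S\equiv W_{11}-W_{12}W_{22}^{-1}W_{21}\ge c'W_{11}$, so that $S\asymp R$ when $\ssa>0$. This does hold, but not via the crude bound $\Vert W_{12}\Vert\le \sqrt{M_\infty}\,W_{11}$ together with $\info(W_{22})\ge cN/\sseh$, which only gives $S\ge W_{11}(1-O(R/N))$ and can fail when $R/N$ is not small (the theorem does not assume it is). A sharper argument: writing the between-row piece as $C=\sum_i w_i\xbid\xbid^\tran$ with $w_i=\nid/(\sseh+\ssah\nid)$ and setting $v=C_{21}/\sqrt{C_{11}}$, positive semidefiniteness of $C$ gives $C_{22}\succeq vv^\tran$, whence
\[
C_{12}(B_{22}+C_{22})^{-1}C_{21}\le C_{11}\,v^\tran(B_{22}+vv^\tran)^{-1}v=\frac{C_{11}\tau}{1+\tau},
\qquad \tau=v^\tran B_{22}^{-1}v\le \frac{M_\infty C_{11}}{\info(B_{22})}\le \frac{M_\infty\sseh}{c\ssah},
\]
a bounded constant; hence $S\ge C_{11}/(1+\tau)$, which is exactly what your sketch needs. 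Two smaller points: your bound $q_j\le\ssah^{-1}\sum_i\zij/\nid$ breaks when $\ssa=0$ (then $\ssah\p0$), so that corner case wants a separate treatment using $1/(\sseh+\ssah\nid)\le 1/\sseh$ together with the now larger $W_{11}\asymp N$; and ``Slutsky'' for passing from $(\ssah,\sseh)$ to $(\ssa,\sse)$ needs a short uniform-continuity argument, since $\hat V_A^{-1}$ sits inside both $W$ and $T$.
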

\begin{proof}
See Section~\ref{sec:proofrlscons} in the supplement.
\end{proof}

The most complicated part of the proof of Theorem~\ref{thm:rlscons} involves handling the
contribution of $\bj$ to $\betah_{\rls}$. In row weighted GLS it is quite standard to have
random errors $\ai$ and $\eij$ but here we must also contend with errors $\bj$
that do not appear in the model for which $\betah_{\rls}$ is the MLE.
Condition~\eqref{eq:effnumcols} is used to control the variance
contribution of the column random effects to the intercept
in $\betah_{\rls}$.  For balanced data it reduces
to $1/C\to0$ and so it has an effective number of columns
interpretation.  Recalling that $\zzt_{ir}$ is the number of
columns sampled in both rows $i$ and $r$, we have 
$\zzt_{ir}\le\nrd$ and so a sufficient condition for~\eqref{eq:effnumcols}
is that $(1/R)\sum_i\nid^{-1}\to0$.  For sparsely observed
data we expect $\zzt_{ir}\ll \max(\nid,\nrd)$ to be typical,
and then these bounds are conservative.

Any realistic setting will have $\sse>0$ and 
we need $\sse>0$ for $\betah_{\rls}$ to be well defined.
So that condition in Theorem~\ref{thm:rlscons} is not restrictive.

It remains to show that the variance component estimates from step 4 
are consistent.  We can just apply
Theorem~\ref{thm:momcons} again.
Therefore the final estimates returned by Algorithm~\ref{alg:alternating} are consistent given only weak conditions on the behavior of $\zij$ and $\xij$.

\subsection{Asymptotic Normality of $\betah$}\label{sec:betaclt}\label{sec:clt}


Here we show that the estimator $\betah_{\rls}$ constructed using consistent estimates of $\ssa$, $\ssb$, and $\sse$ 
is asymptotically Gaussian.
We need stronger conditions than we needed for consistency.

These conditions are expressed in terms of some 
weighted means of the predictors. First, let
\begin{align}\label{eq:xtdj}
\xtdj  =\frac1{\ndj}
\sum_i\zij\frac{\ssa}{\ssa+\sse/\nid}\xbid. 
\end{align}
This is a `second order' average of $x$ for column $j$:
it is the average over rows $i$
that intersect $j$, of averages $\xbid$ shrunken towards zero. 
For a balanced design with $\zij=1_{i\le R}1_{j\le C}$ we 
would have $\xtdj = \xbdd\ssa/(\ssa+\sse/C)$, so then the
second order means would all be very close to $\xbdd$ for large $C$.
Apart from the shrinkage, we can think of $\xtdj$ as a local version
of $\xbdd$ appropriate to column $j$.
Next let
\begin{align}\label{eq:R2k}
k = \sum_j\ndj^2(\xbdj-\xtdj)\Bigm/\sum_j\ndj^2\in\real^p.
\end{align}
This is a weighted sum of adjusted columns means, weighted
by the squared column size.  The intercept component of this $k$ will not be used.



\begin{thm}\label{thm:rlsnormal}
Let $\betah_{\rls}$ be computed with $\ssah \p \ssa$, $\ssbh \p \ssb$, and $\sseh \p \sse>0$ as $N\to\infty$.
Suppose also that 
\begin{align*}
&\info\biggl( \sum_i\xbid\xbid^\tran\biggr),\quad
\info_0\biggl( \sum_{ij}\zij(\xij-\xbid)(\xij-\xbid)^\tran\biggr),\quad\text{and}\\
&
\info_0\biggl(\sum_j\ndj^2(\xbdj-\xtdj-k)(\xbdj-\xtdj-k)^\tran\biggr)
\Bigm/\max_j\ndj^2
\end{align*}
all tend to infinity, where $\xtdj$ is given by~\eqref{eq:xtdj} and
$k$ is given by~\eqref{eq:R2k}.
Next for $c_j = \sum_i\zij\sse/(\sse+\ssa\nid)$ and $c_{ij} = \sse/(\sse+\ssa\nid)$
assume that both $\max_jc_j^2/\sum_jc_j^2$
and $\max_{ij}c_{ij}^2/\sum_{ij}c_{ij}^2$ converge to zero.
Then
$\betah_{\rls}$ is asymptotically distributed as 
\begin{align}\label{eq:rlsnormal}
\dnorm(\beta, (X^\tran V_A^{-1} X)^{-1} X^\tran V_A^{-1} V_R V_A^{-1}X(X^\tran V_A^{-1}X)^{-1}).
\end{align}
\end{thm}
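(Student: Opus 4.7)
The plan is to linearize $\betah_{\rls}-\beta$ and apply a Lyapunov CLT to the linear part. Writing $Y=X\beta+a+b+e$ in the $N$-vector form,
\begin{align*}
\betah_{\rls}-\beta = (X^\tran \hat V_A^{-1} X)^{-1} X^\tran \hat V_A^{-1}(a+b+e).
\end{align*}
The first step is to replace $\hat V_A$ by the population $V_A=\sse I_N+\ssa A_R$. Because $V_A$ is row-block diagonal with blocks $\sse I_{\nid}+\ssa 1_{\nid}1_{\nid}^\tran$, the consistencies $\ssah\p\ssa$ and $\sseh\p\sse>0$, combined with the spectral lower bounds on $X^\tran V_A^{-1}X/N$ supplied by the $\info$ and $\info_0$ hypotheses, reduce the problem to a CLT for $(X^\tran V_A^{-1}X)^{-1}X^\tran V_A^{-1}(a+b+e)$ modulo an additive $o_p$ term of smaller order than the target standard deviation.

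Second, the blockwise inverse $\sse^{-1}(I-\ssa/(\sse+\ssa\nid)\,1_{\nid}1_{\nid}^\tran)$ for row $i$ yields the identity
\begin{align*}
\sse\, X^\tran V_A^{-1}(a+b+e) = \sum_i\tfrac{\nid\sse}{\sse+\ssa\nid}\xbid\,\ai +\sum_j\ndj(\xbdj-\xtdj)\,\bj +\sum_{ij}\zij(\xij-w_i\xbid)\,\eij,
\end{align*}
where $w_i=\ssa\nid/(\sse+\ssa\nid)$ and $\sum_i\zij w_i\xbid=\ndj\xtdj$ by the definition~\eqref{eq:xtdj} of $\xtdj$. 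These three sums are mutually independent, and since $X^\tran V_A^{-1}(a+b+e)$ has covariance $X^\tran V_A^{-1}V_R V_A^{-1}X$, sandwiching with $(X^\tran V_A^{-1}X)^{-1}$ produces exactly the limiting covariance in~\eqref{eq:rlsnormal}.

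Third, I would apply the Cram\'er--Wold device and a Lyapunov CLT to each of the three sums. The theorem's conditions supply the required max-over-sum negligibility in both the intercept and the slope subspaces: the $\max_{ij}c_{ij}^2/\sum c_{ij}^2\to 0$ hypothesis is the Lyapunov ratio for the intercept component of the $\eij$-sum, because the intercept component of $\xij-w_i\xbid$ equals $c_{ij}=\sse/(\sse+\ssa\nid)$; the $\max_jc_j^2/\sum c_j^2\to 0$ hypothesis plays the same role for the $\bj$-sum, because the intercept component of $\ndj(\xbdj-\xtdj)$ equals $c_j=\sum_i\zij\sse/(\sse+\ssa\nid)$; divergence of $\info(\sum_i\xbid\xbid^\tran)$ is the variance lower bound for the $\ai$-sum; and the two $\info_0$ conditions provide the variance lower bounds in the slope subspaces of the $\eij$- and $\bj$-sums. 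The uniform bound $\Vert\xij\Vert^2\le M_\infty$ and the finite fourth moments in Model~\ref{mod:twofactor} supply the Lyapunov summand bounds.

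The main obstacle will be the $\bj$-sum: $V_A$ does not model the column random effects, so they appear as extra noise whose behavior in the slope coordinates must be controlled. Because the intercept component of $\xbdj-\xtdj$ is generically nonzero, a direct eigenvalue lower bound on $\sum_j\ndj^2(\xbdj-\xtdj)(\xbdj-\xtdj)^\tran$ can be inflated by that intercept piece alone and fail to certify a slope-direction CLT. Subtracting the weighted mean $k$ defined in~\eqref{eq:R2k} isolates the genuine slope variability from this intercept coupling, and verifying the Lindeberg condition for arbitrary slope-subspace projections of the $\bj$-sum under the $k$-centered $\info_0$ lower bound is the technical heart of the argument; once this is established, joint normality of the three independent sums and the consistency-based replacement of $\hat V_A$ by $V_A$ combine to give~\eqref{eq:rlsnormal}.
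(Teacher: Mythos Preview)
Your proposal is correct and follows essentially the same route as the paper's proof: replace $\hat V_A$ by $V_A$ via consistency, split $X^\tran V_A^{-1}(a+b+e)$ into the three independent sums you wrote (your formulas are exactly right, including the identifications of $c_j$ and $c_{ij}$ as the intercept coordinates of the $b$- and $e$-coefficients), and verify a Lyapunov condition term by term using the $\info$, $\info_0$, and max-over-sum hypotheses together with bounded $\Vert\xij\Vert$ and finite fourth moments. One small imprecision: since $\info_0$ already discards the intercept row and column, the $k$-centering is not needed to keep the intercept from ``inflating'' the slope eigenvalue; rather, it is needed because Cram\'er--Wold must cover \emph{mixed} directions $w=(w_0,w_-)$, where cross-terms between $c_j$ and $\ndj(\xbdj-\xtdj)_-$ can deflate $w^\tran\Sigma_b w$, and subtracting the weighted mean $k$ decouples these so that the slope-block Lyapunov ratio $\max_j\ndj^2\big/\info_0(\cdot)$ controls all such directions.
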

\begin{proof}
See Section~\ref{sec:proofrlsnormal} in the supplement.
\end{proof}

The statement that $\betah_{\rls}$ has asymptotic distribution
$\dnorm(\beta,V)$ 
is shorthand for  $V^{-1/2}(\betah-\beta)\p\dnorm(0,I_p)$.

Theorem~\ref{thm:rlsnormal} imposes three information criteria.
First, the $R$ rows $i$ with $\nid>0$ must have sample average $\xbid$ vectors
with information tending to infinity.  
It would be reasonable to expect that information to be proportional to $R$
and also reasonable to require $R\to\infty$ for a CLT.
Next, the sum of within row sums of squares and cross products of
row-centered $\xij$ must have growing information, apart from the intercept term.
Finally,  thinking of $\xbdj-\xtdj$ as locally centered mean for column $j$,
those quantities centered on the vector $k$
 must have a weighted sum of squares that is not dominated by any single column
when weights proportional to $\ndj^2$ are applied.

The conditions on $c_j$ and $c_{ij}$ are used to show that the CLT will apply
to the intercept in the regression.
The condition on $\max_jc^2_j/\sum_jc_j^2$ will fail if for example column $j=1$
has half of the $N$ observations, all in rows of size $\nid=1$. In the case of an $R\times C$
grid $\max_jc_j^2/\sum_jc_j^2 = 1/C$ and so we can interpret this condition as requiring 
a large enough effective number $\sum_jc_j^2/\max_jc_j^2$ of columns in the data. 

The condition on $\max_{ij}c_{ij}^2/\sum_{ij}c_{ij}^2$ will fail if for example the data contain
a full $R\times C$ grid of values plus a single observation with $i=R+1$ and $j=C+1$.
The problem is that in a row based regression, a single small row can get outsized leverage.
It can be controlled by dropping relatively small rows. This pruning of rows is only
used for the  CLT to apply to the intercept term. It is not needed for other components
of $\beta$ nor is it need for consistency. We do not know if it is necessary for the CLT.

\subsection{Computing $\var(\betah_\rls)$}\label{sec:varbetahrls}
Here we show  how to compute the estimate of the asymptotic variance of $\betah_{\rls}$ from Theorem~\ref{thm:rlsnormal}. First,
\begin{align}\label{eq:varbeta}
&(X^\tran V_A^{-1} X)^{-1} X^\tran V_A^{-1} V_R V_A^{-1}X(X^\tran V_A^{-1}X)^{-1}\notag\\
&= (X^\tran V_A^{-1} X)^{-1} X^\tran V_A^{-1} (V_A+\ssb B_R)V_A^{-1}X(X^\tran V_A^{-1}X)^{-1} \notag\\
&= (X^\tran V_A^{-1} X)^{-1} + (X^\tran V_A^{-1} X)^{-1} X^\tran V_A^{-1} \ssb B_R V_A^{-1}X(X^\tran V_A^{-1}X)^{-1} \notag\\
&= (X^\tran V_A^{-1} X)^{-1}+(X^\tran V_A^{-1} X)^{-1} \var(X^\tran V_A^{-1}{b})(X^\tran V_A^{-1}X)^{-1},
\end{align}
where ${b}$ is the length-$N$ vector of column random effects for each observation.
That is $\bj$ appears $\ndj$ times in $b$.

Using the Woodbury formula 
we find that $\var(X^\tran V_A^{-1}{b})$ equals
\begin{align}\label{eq:varbetapart2}
\dfrac{\ssb}{\fpe}\sum_j \Bigl(X_{\sumdot j}-\ssa\sum_i\zij\dfrac{X_{i\sumdot}}{\sse+\ssa\nid}\Bigr)\Bigl(X_{\sumdot j}-\ssa\sum_i\zij\dfrac{X_{i\sumdot}}{\sse+\ssa\nid}\Bigr)^\tran.
\end{align}
Recall that $X_{i\sumdot}$ and $X_{\sumdot j}$ are row and column 
totals, not means.

In practice, we plug consistent estimates $\ssah$, $\ssbh$, and $\sseh$ in for $\ssa$, $\ssb$, and $\sse$ in \eqref{eq:varbeta} and \eqref{eq:varbetapart2}. We already have $(X^\tran \hat V_A^{-1}X)^{-1}$ as well as $\nid$ and $X_{i\sumdot}$ for $i=1,\dots,R$ available from computing $\betah_{\rls}$. In a new pass over the data, we compute $X_{\sumdot j}$ and $\sum_i\zij X_{i\sumdot}$ for $j=1,\dots,C$, incurring $O(Np)$ computational and $O(Cp)$ storage cost. Then, \eqref{eq:varbetapart2} can be found in $O(Cp^2)$ time; a final step finds \eqref{eq:varbeta} in $O(p^3)$ time. Overall, estimating the variance of $\betah_{\rls}$ requires $O(Np+Cp^2+p^3)$ additional computation time and $O(Cp)$ additional space.

\section{Comparisons to the MLE}\label{sec:experiments}

Here we compare Algorithm~\ref{alg:alternating} to maximum likelihood
for a linear mixed effects model, looking at both
computational efficiency and statistical efficiency.
We use a state of the art code for linear mixed models called
 MixedModels \citep{lme4julia}.
This is written in Julia \citep{beza:edel:karp:shah:2017}
and is much faster than other linear
mixed model code we have tried.

Our examples have $R=C=2\sqrt{N}$ for various $N$.
We create an $R\times C$ matrix of $\zij$ and randomly choose
exactly $RC/4$ components to be $1$.
We have an intercept and $p$ other $x$'s with
$x_{ij,t}\simiid\dnorm(0,1)$, for $2\le t\le p$.
We use all $p\in\{1,5,10,20\}$.
We take $\ssa=2$, $\ssb=1/2$, $\sse=1$
and all $\beta_j=1$.
Our simulated random effects and our noise are all Gaussian because
we are comparing to code that computes a Gaussian MLE.

\subsection{Computational cost}

The MixedModels package in Julia uses a derivative-free optimization
method from the BOBYQA package \citep{powe:2009}.  At each iteration it 
evaluates the log likelihood at a set of points, fits a quadratic function
to those points and minimizes the quadratic.  
The number of  likelihood evaluations per iteration is fixed, 
but we are unable to model the number of iterations required. 
We consider the cost per likelihood evaluation next.

The log likelihood is
\begin{align*}
(Y-X\beta)^\tran(\ssa A_R+\ssb B_R+\sse I_N)^{-1}(Y-X\beta)
+\ln|\ssa A_R+\ssb B_R+\sse I_N|.
\end{align*}
In an analysis using the Woodbury formula we find that
the log likelihood can be computed in $O(R^3+\sum_i\nid^2)$ time.
Because $1\le R\le N$ we can write $R=N^\alpha$ for some $0\le\alpha\le 1$.
Then
$$
R^3+\sum_i\nid^2
=R^3+R\Bigl(\frac1R\sum_i\nid\Bigr)^2
\ge R^3+N^2R^{-1}=N^{3\alpha}+N^{2-\alpha}.
$$
Now $N^{3\alpha}+N^{2-\alpha}
>\max(N^{3\alpha},N^{2-\alpha})$ and $\alpha=1/2$
minimizes $\max(3\alpha,2-\alpha)$.
Therefore $R^3+\sum_i\nid^2>N^{3/2}$.

This is the same estimate one gets by
considering the cost of solving a system of $R+C$
equations in $R+C$ unknowns.
There are faster ways to solve the equations
in special cases like nested models, and there is the
possibility that sparsity patterns in the data can be
exploited for speed.  However, we are interested
in arbitrarily complicated sampling plans where
these special cases cannot be assumed.

Figure~\ref{fig:timeperiteration} shows
computed cost per iteration for $10$
replicates at each of $11$ different sample sizes $R^2/4$
given by $R=10$, $20$, $40$, $\dots$, $2^9\times 10$, with $p=5$.
The cost per iteration is flat for small $N$ presumably
due to some overhead.  It grows slowly until
about $N=10^4$ and then it appears to increase parallel
to a reference line with slope $3/2$.  

\begin{figure}
	\centering 
	\includegraphics[scale=0.65]{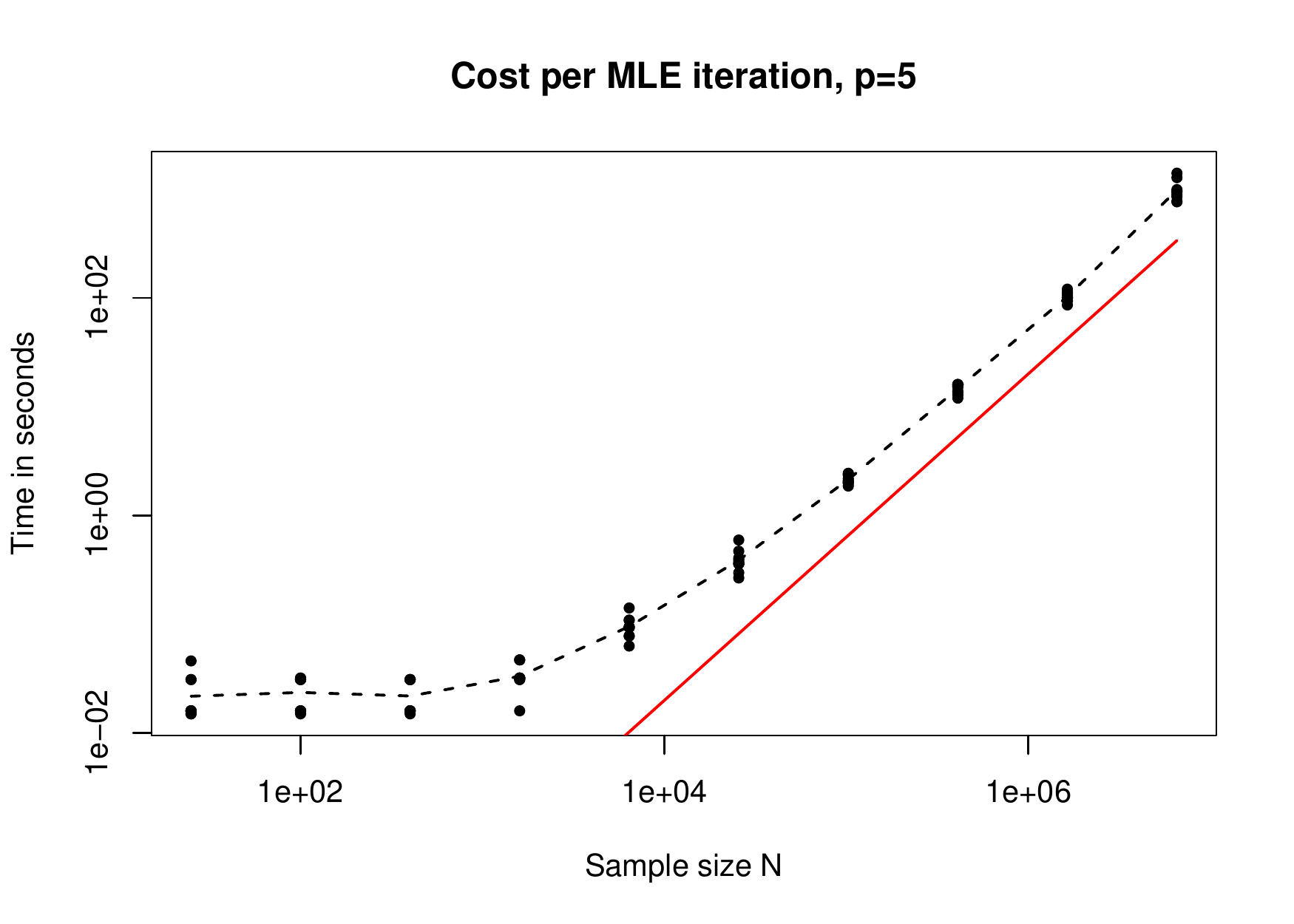}
	\caption{For $p=5$, cost per iteraton of MLE. 
The dashed curve is the average of $10$ replicates. 
The solid reference line is parallel to $N^{3/2}$.}
	\label{fig:timeperiteration}
\end{figure}

Figure~\ref{fig:cost5} shows total cost versus
$N$ in a setting with $p=5$ averaged over $100$ data sets.
The cost curve for the MLE computation looks different
from Figure~\ref{fig:timeperiteration}. It does not
start out flat for small $N$.  We found that the number of iterations
required to find the MLE generally rose over the range $64\le N\le 6400$
and then declined gently thereafter.

From the analysis and empirical results, we
find that a cost per iteration of $O(N^{3/2})$
is a realistic lower bound for the MLE code.
The method of moments cost is $O(N)$ theoretically and appears to 
be proportional to $N$ empirically.

\begin{figure}
	\centering 
	\includegraphics[scale=0.65]{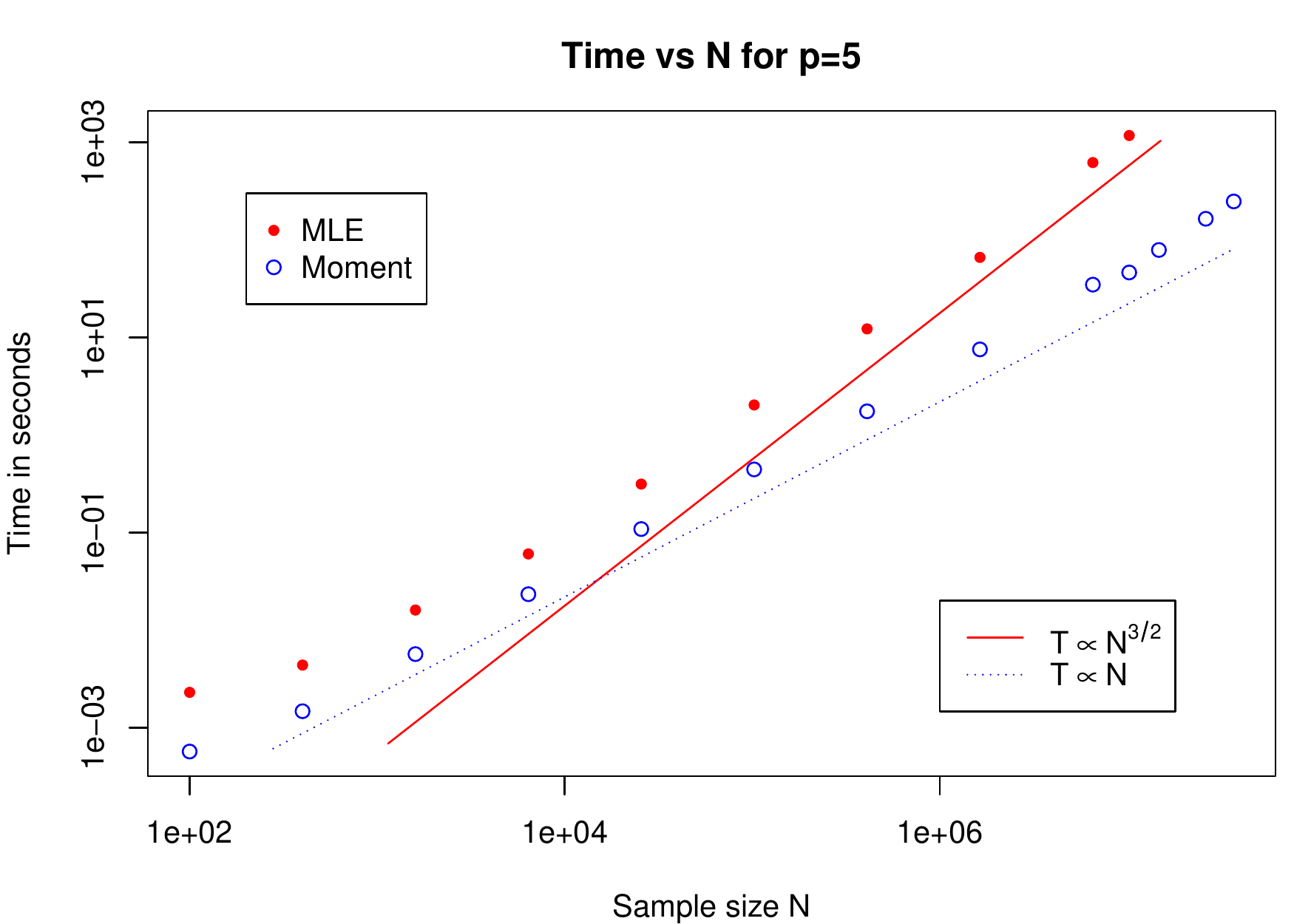}
	\caption{
Computational cost for MLE and moments versus sample size $N$. 
There are reference lines parallel to $N^{3/2}$ and $N^1$.}
	\label{fig:cost5}
\end{figure}

Our computations were done with data generated in memory.
In commercial applications, there could be a much larger
time cost proportional to $N$ involved in reading the data from external storage.
However, the $N^{3/2}$ cost component would be considerably
larger at commercial scale, where $N$ is much larger than
in our examples. For the method of moments it is straightforward
to read and use the data in parallel even for large $N$.

A second computational issue arises with the linear
mixed effects MLE. The code crashes on large enough data
sets because the algorithm requires $O((R+C)^2)$ memory.
For $p=5$ we were unable to take the next step
past $N=5120^2/4\doteq 6.5\times 10^6$. The program runs out of memory on our cluster.
For $p=1$, it crashes for $N$ near $18$ million observations.
The method of moments in Algorithm~\ref{alg:alternating}
has linear cost both theoretically and empirically and can be implemented in $O(R+C)$ memory.
The difference is minor for our CPU time simulations that also keep all $N$
observations in memory, but it will be critical in large commercial applications.

 

\subsection{Statistical efficiency}

For statistical efficiency we considered
$p=1$, $5$, $10$ and $20$.  
Sample sizes $N=100\times 4^j$ for $j=0,1,\dots,8$
were replicated $100$ times each.  A few larger values of $N$
were replicated $10$ times each, though the MLE code
would not run on all of the largest sample sizes we tried.
The pattern in the results was the same
for all of those $p$.  We display results for $p=5$
in Figure~\ref{fig:acc5}.  The MSEs for $\beta$  decay proportionally to $1/N$.  
The reference curves for variance components in Figure~\ref{fig:acc5} are
what we would expect from IID sampling of $\ai$, $\bj$ and $\eij$, respectively
namely $2\sigma_A^4/R$, $2\sigma_B^4/C$ and  $2\sigma_E^4/N$
where $R=C=2\sqrt{N}$.

\begin{figure}[t!]
	\centering 
	\includegraphics[scale=0.65]{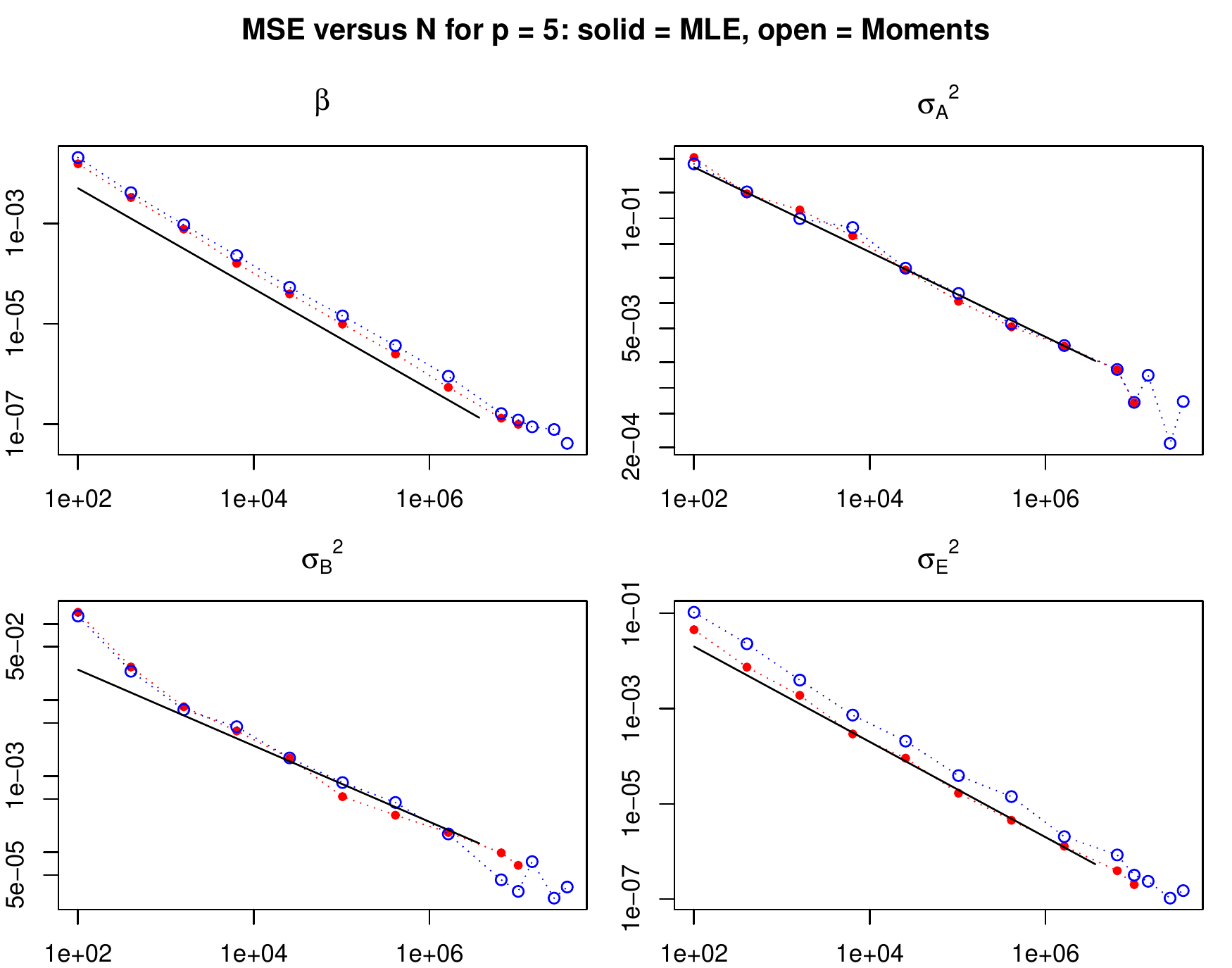}
	\caption{
Mean squared errors for $\beta$, $\ssa$, $\ssb$ and $\sse$
versus $N$. Reference lines for $\beta$ and $\sse$ are
parallel to $N^{-1}$. Reference lines for $\ssa$ and $\ssb$
are parallel to $N^{-1/2}$.
}
	\label{fig:acc5}
\end{figure}

The parameter of greatest interest will ordinarily
be $\beta$.  The MLE has greater accuracy for $\beta$,
as it must by the Gauss-Markov theorem. In this
instance the MLE has about half the MSE that the
method of moments does.
For the variance components,
the  method of moments attains essentially the same MSE
as the MLE does for $\ssa$ and $\ssb$.
The MLE has greater efficiency for $\sse$.  In ordinary
use we would want to know ratios of variance components
and the uncertainty in such ratios is dominated by that
in $\ssa$ and $\ssb$, where the two methods have comparable accuracy.

In this example, we saw a modest loss in statistical
efficiency of $\betah$ and $\sseh$ and comparable
accuracy for $\ssah$ and $\ssbh$.
These comparisons were run on data simulated from the Gaussian model that the MLE assumes. 
The method of moments does not require that assumption.  
Likelihood based variance estimates for variance components, such as $\wh\var(\ssah)$, can fail to be even asymptotically correct when the Gaussian model does not hold.

\section{Conclusion and Future Work}\label{sec:conclusion}

We have proposed an algorithm for the two-factor linear mixed effects model with crossed covariance structure that provides consistent and asymptotically normal parameter estimates. It alternates twice between estimating the regression coefficients and estimating the variance components via the method of moments.

Unlike the available methods based on Bayes theorem or maximum likelihood,
the moment estimates cost $O(N)$ time and $O(R+C)$ space.  The variance estimate for $\betah$
is obtained by substituting consistent estimates of $\ssa$, $\ssb$, and $\sse$
into exact finite sample formulas for that variance matrix.
The variance estimates for $\ssah$, $\ssbh$, and $\sseh$ are obtained by such a substitution
in mildly conservative formulas from \cite{GO17}.
Here the usual root $n$ consistency 
from IID settings is replaced by a $1/\sqrt{\epsilon}$
consistency for $\epsilon = \max(\epsilon_R,\epsilon_C)$.
Interpreting $1/\sqrt{\epsilon}$ as an effective sample size might be somewhat
conservative because in theorems such as Theorem~\ref{thm:varcoefbiasandvar}
the value of $\epsilon$ appears in  upper bounds.

We exchange higher MSEs for an algorithm with cost only linear in the number of observations.
We do not know how bad the efficiency loss might be in general, but we expect that
when the pure error term $\sse$ is meaningfully large that the loss will not be extreme. Also, if one of $\ssa$ and $\ssb$ very much dominates the other one, we can get a GLS estimate that accounts for the dominant source of correlation.

We anticipate that a martingale central limit theorem will
apply to the variance component estimates $\ssah$, $\ssbh$, and $\sseh$.
Some details will be in the forthcoming dissertation of the first author.
We do not anticipate those variance components to be uncorrelated
with $\betah$ because the random variables $\ai$, $\bj$, and~$\eij$
do not need to have symmetric distributions.

This paper is a second step in developing big data versions of mixed model
procedures such as the Henderson estimators.
One followup step is to incorporate interactions between fixed and random steps, as the Henderson III
model allows.  Another is to incorporate interactions among latent variables. At present both kinds of
interactions would serve to inflate $\sse$. A third step is to
adapt to  binary responses, for instance by replacing the identity link in Model~\eqref{eq:refmodel}, with a logit  or probit link. This third step is of value because many responses  in e-commerce are categorical. One binary response of interest to Stitch Fix is whether the client keeps the item of clothing. 

\section*{Acknowledgments}

This work was supported by the US NSF under grants DMS-1407397
and DMS-1521145.
KG was supported by US NSF Graduate Research Fellowship under grant DGE-114747. 
Any opinions, findings, and conclusions or recommendations expressed in this material are those of the 
authors and do not necessarily reflect the views of the National Science Foundation.

We would like to thank Stitch Fix and in particular Brad Klingenberg for providing us with the data used in our real-world experiment and motivation and encouragement during the project.

\begin{supplement}
  \sname{Supplement A}\label{SuppA}
  \stitle{Proofs}
\slink[url]{http://statweb.stanford.edu/$\sim$owen/reports/vllmemsupp.pdf}
  \sdescription{This supplement contains proofs of all the theorems and lemmas not proved in the main paper.}
\end{supplement}

\bibliographystyle{apalike}
\bibliography{MixedRefs}

\end{document}